\newcommand{\Rmnum}[1]{\expandafter\@slowromancap\romannumeral #1@}
\newtheorem{theorem}{Theorem}
\newtheorem{corollary}{Corollary}
\newtheorem{remark}{Remark}
\newtheorem{lemma}{Lemma}
\begin{document}
	
	% \title{Random Access Performance for Three Coverage Enhancement  Groups  in NB-IoT Networks}
	\title{{Analysis of Random Access in NB-IoT Networks with Three Coverage Enhancement Groups: A Stochastic Geometry Approach}}
	
	%Modeling and Analyzing Random Access in NB-IoT Networks with Three Coverage Enhancement Groups  under a New Spatio-Temporal Model
	\author{
		\IEEEauthorblockN
		{ Yan Liu, ~\IEEEmembership{Student Member,~IEEE,}
			Yansha Deng,
			~\IEEEmembership{Member,~IEEE,}
			Nan Jiang,
			~\IEEEmembership{Student Member,~IEEE,}\\
			Maged Elkashlan,~\IEEEmembership{Senior Member,~IEEE,}
			and Arumugam Nallanathan ~\IEEEmembership{Fellow,~IEEE}
		}\\
	
	\vspace{-0.2cm}
		
		\thanks{Y. Liu, N. Jiang, M. Elkashlan, and A. Nallanathan are with School of Electronic Engineering and Computer Science, Queen Mary University of London, London, UK
			(e-mail:\{yan.liu, nan.jiang, maged.elkashlan, a.nallanathan\}@qmul.ac.uk). }
		\thanks{Y. Deng is with Department of Engineering, King's College London, London, UK (Corresponding author: Yansha Deng (e-mail:yansha.deng@kcl.ac.uk).)}

		\thanks{{Part of this work was presented in IEEE Global Communications Conference, Dec. USA 2019 \cite{9013330}.
		}}
			
	}

	\maketitle

%	\vspace*{-2cm}        
	\begin{abstract}
		NarrowBand-Internet of Things (NB-IoT) is a new 3GPP radio access technology designed to provide better coverage for Low Power Wide Area (LPWA) networks. 
		To provide reliable connections with extended coverage, a repetition transmission scheme and up to three Coverage Enhancement (CE) groups are introduced into NB-IoT during both Random Access CHannel (RACH) procedure and data transmission procedure, where each CE group is configured with different repetition values and transmission resources.
		To characterize the RACH performance of the NB-IoT network with three CE groups, this paper develops a novel traffic-aware spatio-temporal model to analyze the RACH success probability, where both the preamble transmission outage and the collision events of each CE group jointly determine the traffic evolution and the RACH success probability.
		% This paper models the RACH success probabilities of three CE groups with different repetition parameters over multiple time slots, which allow flexible RACH configuration for each CE group.
		% The physical layer network is modeled and analyzed based on stochastic geometry in the spatial domain, and the queue evolution is analyzed based on probability theory in the time domain.
		Based on this analytical model, we derive the analytical expression for the RACH  success probability of a randomly chosen IoT device in each CE group over multiple time slots with different RACH schemes, including baseline, back-off (BO), access class barring (ACB), and hybrid ACB and BO schemes (ACB\&BO).
		% We evaluate the performance improvement for different RA schemes and offer insights into effectiveness of each RACH scheme.
		{Our results have shown that  the RACH success probabilities of the devices in three CE groups  outperform that of a single CE group network but  not for all the groups,   which is affected by the choice of the categorizing parameters.}This mathematical model and analytical framework can be applied to evaluate the performance of multiple group users of other networks with spatial separations.
		% 2) the massive number of IoT devices have the most impact on the RACH access probabilities of CE group 2;
		% 3) the impact of increasing repetition value on the RACH access probabilities of CE group 1 is not so much;
		% 4) the RACH success probabilities follows ACB$\&$BO $>$  BO $>$ ACB $>$ baseline scheme
		
		% Our results have shown that the RACH success probabilities of the devices in CE group 0 and 1 greatly outperform that of a single CE group network and that increasing the repetition value will improve the RACH success probability,  especially for the three CE groups.
		
		% The analytical results can also be extended to analyze multiple group users of other networks with spatial separations.

		% Our mathematical model and analytical framework can be applied to evaluate the RACH performance of various types of RACH schemes in NB-IoT networks with  three CE groups.
		% Our results show that the RACH success probabilities of IoT devices in three CE groups with different repetition values can largely outperform that with single CE group, which demonstrates the advantages of three CE groups configurations over single CE group configuration.

	\end{abstract}
	
	\vspace*{+0.3cm}
	
	\begin{IEEEkeywords}
		NB-IoT, Coverage Enhancement Groups, Random Access, Preamble Repetition, Collision.
	\end{IEEEkeywords}

	%========================================================================
	\section{Introduction}
	%The Internet of Things (IoT) refers to interconnection and exchange of data among virtually all types of devices.
	The Internet of Things (IoT) offers a wide spectrum of opportunities for innovative applications designed to improve our life quality.
	The plethora of opportunities offered by IoT services include health-care, automation, metering, tracking, monitoring, and etc\cite{7123563}\cite{8058399}, in which ubiquitous connectivity and coverage among massive number of IoT devices are required for successful operation of these IoT services.
	Cellular-based network is deemed as one solution to provide connectivity for massive number of IoT devices, due to its advantages in high scalability, diversity, and security, as well as low cost without additional infrastructure deployments \cite{name2015}\cite{6375894}.
	% Different IoT services require different connectivity solutions. For example, consumer electronics typically require short range connectivity and are likely to be served using personal or local area networks. Healthcare can be served using local or wide area networks. Utilities and smart cities, however, require wide area networks due to the need for ubiquitous coverage.
	%For services that require wide area coverage, cellular systems can provide the connectivity. }
	
	There exist several challenges in cellular-based IoT networks, including low device cost (below 5 USDs), limited uplink latency (below 10s), massive number of devices (up to 40 per household), long battery life (10 years), and enhanced coverage (20dB better than GPRS)
	\cite{shariatmadari2015machine}\cite{landstrom2016nb}.
	%Finally, the new LTE release 13 is also targeting a significant coverage improvement (20 dB better than GPRS)\cite{8108182}.
	To cope with these challenges, the Third Generation Partnership Project (3GPP) has standardized the NB-IoT in Release 13, 
	%The main design objectives of NBIoT are increased coverage, long battery life (between 10 and 15 years), and low user equipment (UE) device complexity. 
	which defines narrow transmission bandwidth, repetition transmission, single-tone transmission, enhanced discontinuous reception, power spectral density (PSD) boosting , and other network architectural updates \cite{landstrom2016nb}\cite{schlienz2016narrowband}. 
	% NB-IoT is designed to improv4 
	% Deep coverage is another feature of NB-IoT,  where  it  is  designed  to  improve  the  indoor  coverage  by  20  dB  compared  to  conventionalGSM/GPRS. This is achieved by a higher power density, as radio transmissions are concentratedon  a  narrower  carrier  bandwidth  of  just  180  kHz.  The  Coverage  Enhancement  (CE)  feature
	% NB-IoT is developed from existing Long-Term Evolution (LTE) functionalities, but only concentrate on  a  narrower  carrier  bandwidth  of  just  180  kH  to support massive number of IoT devices and achieve significant Coverage Enhancement (CE).
	% Different from the legacy LTE
	% The coverage enhancement for NB-IoT can be achieved using various protocol layer modifications.
	% % One major modification is to enforce repetition for transmissions on both downlink and uplink channels to achive significant coverage enhancement\cite{el2017m2m}\cite{huawei2018}. 
	% but only operates in a narrow bandwidth (180 KHz) for Coverage Enhancement (CE) [3]

	Coverage Enhancement (CE) is one of the features proposed for NB-IoT networks, which can be achieved with the help of the narrower carrier bandwidth and the repetition transmission \cite{el2017m2m}.
	On one hand, NB-IoT can provide a higher PSD with respect to Long-Term Evolution (LTE)\cite{huawei2018}, as LTE operates in physical resource block (PRB) units of 180 kHz, but the NB-IoT can operate with 15 kHz and 3.75 kHz\cite{7794567}.
	On the other hand, RACH repetition and data repetition are enforced in both uplink and downlink for coverage enhancement.
	More importantly, according to the 3GPP standard \cite{name2015}, to support various traffic with
	different coverage conditions, each Base Station (BS)  categorizes its IoT devices into up to three CE groups, which provides efficient management of a massive number of IoT devices depending on their received signal quality.
	% and determines the repetition value for each CE group through the RACH procedure.. 
	% Such a deployment allows the IoT device to be served in different coverage conditions and 
	% NB-IoT introduces three CE groups for efficient management of massive number of IoT devices depending on their received signal quality.
	% As reported in\cite{7842106}\cite{8377263},  the three CE groups are: basic coverage (144 dB MCL), robust coverage (154 dB MCL (Maximum Coupling Loss)) , and extreme coverage (164 dB MCL).
	The RACH repetition value is determined by the BS based on the CE group of the IoT device through the RACH procedure \cite{schlienz2016narrowband}\cite{8385556}.
	% and chosen from the set $\{1, 2, 4, 8, 16, 32, 64, 128\}$

	In NB-IoT, the main purpose of RACH procedure is to achieve uplink synchronization  and  obtain  the  grant for  initial  access to the network\cite{dahlman20134g}, in which the first step is to transmit a RACH preamble.
	Notably, massive connections in NB-IoT may bring simultaneous RACH requests under limited number of available preambles. 
	Thus, it is of great importance to model and analyze the RACH performance of NB-IoT networks, which can be useful for system design and optimization. 
	{
		In  \cite{luo1999stability,duan2013dynamic,7486114,8385148}, mathematical models of contention-based RACH focusing on the Signal-to-Interference-plus-Noise Ratio (SINR) outage or collision problem have been studied.
		% However, to the best of our knowledge, most works have focused on studying the SINR outage from the network point of view without considering collision and most results are based on the uplink power control due to its analysis simplicity.
		The authors in \cite{7486114} combined queueing theory and stochastic geometry to analyze the stability region in a discrete-time slotted RACH network. In \cite{8385148}, the authors designed a %user grouping and effective window size based
		RACH protocol for the standalone Long-Term Evolution (LTE) system in
		an unlicensed spectrum (SA LTE-U), where the UEs are divided into several
		groups, and at any time only one group is activated and allowed for its UEs to send RA attempt, which avoids the inter group UEs’ collision. 
		Importantly, previous results in LTE systems cannot be directly applied to NB-IoT due to its unique characteristics, including transmission repetition, three CE groups configuration, frequency hopping,  and etc.
		The authors in \cite{8605340} investigated a tradeoff between repetition of preambles in NB-IoT and their retransmission for the RACH procedure in an
		NB-IoT system with single CE group. The capacity limits of RACH
		for LTE-based IoT and NB-IoT services were studied in \cite{6678832} and \cite{8239592}, respectively.
		Although \cite{8239592}  studied the random access channel in NB-IoT networks with three CE groups, it did not consider the repetition schemes in NB-IoT, the packets evolution, the time correlation interference and etc.
	}
	
	% Our previous work \cite{jiang2017random} has provided the preamble transmission model without considering collision and \cite{8258982,nan2018collision} have
	% provided a general analytical framework to characterize the RACH success probability in IoT based on the uplink power control.
	% In this work, we take into account the SINR outage events as well as the collision events at the eNB, where the IoT devices do not adopt uplink power control in evaluating the success RACH. According to \cite{schlienz2016narrowband}, for the case that there are more than two preamble repetitions, the transmit power is given by the cell specific maximum transmit power without power control. 

	Our previous work \cite{8258982} has provided a general analytical framework to characterize the RACH success probability in NB-IoT networks with preamble repetition scheme based on the preamble transmission model in  \cite{nan2018random} and collision model in \cite{nan2018collision}.
	Note that \cite{8258982} only considered NB-IoT networks with a single CE group in a single time slot with the transmit power of the IoT device determined by the path-loss inversion power control due to the analytical simplicity, which does not align with the practical NB-IoT networks with multiple CE groups setting. 
	% In addition, the transmit power in our previous work is determined by the path-loss inversion power control due to its analysis simplicity, which is also not always practical for NB-IoT networks.
	According to the 3GPP standard \cite{Tel2016Physical}, for the IoT device with the repetition value larger than two, its transmit power should be set as the cell specific maximum transmit power. 
	% To the best of our knowledge, there has been no work studying RACH success probability in NB-IoT networks with three CE groups over multiple time slots.

	Different from \cite{8258982}, we model and analyze the RACH success probability taking into account the three geographically separated CE groups in each cell with their repetition values in NB-IoT networks  in multiple time slots.
	{
		We also evaluate the efficiency of several RACH schemes based on the presented analytical model, including baseline, back-off (BO), access class barring (ACB), and hybrid ACB and BO schemes (ACB\&BO), in the NB-IoT network to alleviate uplink congestion by reducing the high interference and high collision probability when massive IoT devices contend for the uplink channel resource at the same time\cite{3gpp2011study}\cite{6525600}. }
	{
		In this paper, we address the following fundamental questions:
		1) how to model the analyze the RACH success probabilities in the NB-IoT networks with three CE groups;
		2) to what extent the repetition transmission scheme improves the RACH success probabilities in different groups; 
		3) to what extent the RACH success probabilities of three CE groups outperform those of a single CE group;
		4) to what extent the ACB, BO, and hybrid  ACB$\&$BO  schemes improve the RACH success probabilities in different groups.}
	%To do so, we present a novel spatio-temporal mathematical framework to analyze and evaluate both the reliability and latency performances for three different GF access HARQ schemes. 
	{
		To solve these problems, we develop a novel spatio-temporal mathematical framework to analyze and evaluate the RACH success probability for NB-IoT networks
		with three CE groups using stochastic geometry and probability theory, taking into account the SINR outage events as well as the collision events at the BS. }
	
	Generally speaking, in the NB-IoT network with three CE groups, the physical layer parameters and network topology can strongly affect the RACH performance of each CE group, due to that the received SINR distribution at the BS depends upon the joint distribution of the received powers from the serving IoT device and the interfering IoT devices in each CE group, which ultimately depends on the network topology.
	In this scenario, the random positions and the numbers of IoT devices in three CE groups make accurate modeling and analysis of the interference in each CE group even more complicated.
	
	Even though stochastic geometry has been regarded as a powerful tool to model and analyze mutual interference between transceivers in the wireless networks with its tractability and realism in modeling irregular node locations \cite{andrews2011tractable,deng2016physical,deng2016artificial,elsawy2013stochastic}, there are three aspects that limit the application of conventional stochastic geometry analysis to the RACH performance analysis in NB-IoT networks with three CE groups over multiple time slots: 1) conventional stochastic geometry works focused on analyzing normal uplink and downlink data transmission channel, where the intra-cell interference is not considered, due to the ideal assumption that each orthogonal sub-channel is not reused in a cell, which is not the case when massive IoT devices in each CE group of a cell may randomly choose and transmit the same preamble using the same sub-channel, bringing the intra-cell interference; 
	% In practical NB-IoT networks with three CE groups, the mutual interference among transmissions is much more intricate than the conventional LTE network systems; 
	2) the interference field in conventional stochastic geometry works is mostly modeled by a homogeneous PPP to maintain tractability, which is not the case for the interference field in each CE group of each cell with spatial separation into three coverage areas among three CE groups;
	3) most existing stochastic geometry works always consider inversion power control for analytical simplicity, as the radius term is missing from the desired received power term.
	%In this paper, we develop a novel spatio-temporal mathematical framework for NB-IoT networkswith three CE groups using stochastic geometry and probability theory
	% We focus on multiple time slots,  thus the correlation between the packets in the current time slot and the previous unsuccessful packets as well as new arrival packets.
	
	According to the 3GPP standard, the consideration  of each CE group is different and we need to model and analyse each CE group separately and differently. 
	The new challenges of this work are listed as: 
	{
		1) both the intra- and inter- group interference for the same group is considered, due to that the IoT devices in the same group in a cell may randomly choose and transmit the same preamble using the same sub-channel;
		2) the interference field of each CE group needs to be modeled separately based on their different received power region; 
		% the interfering IoT device point process seen at the typical BS in each CE group does not follow a PPP and is hard to model due to the correlation between the Voronoi cells of the Poisson-Voronoi tessellation and the channel access scheme in each cell;
		3) the transmit powers of CE group 1 and 2 are generally a fixed power, and thus the interference from interfering IoT devices depends on the different and random transmission distances in each CE group; 
		4) the configured parameters of three CE groups are different and related, which determines the definition equation of RACH success probability; 
		5) our analysis considering multiple time slots need capture the traffic change over time due to new arrival packets, and previous unsuccessful packets.}
	% One challenge of which is how to characterize the number of the active IoT devices accessing using the same preamble (i.e., the interfering IoT devices) in each CE group of the typical cell analytically.
	% Intuitively speaking, work \cite{8258982} is a special case of this paper when there is a single CE group and a single time slot. 
	% The other challenge is how to evaluate the mutual intra-cell interference  in each CE  group of the typical cell. 

	% where spatial separation are introduced, i.e., a cell is separated into three coverage areas for three CE groups; 
	% 3) these conventional stochastic geometry works only modeled the spatial distribution of transceivers, and ignored the interactions between static properties of physical layer network and the dynamic properties of queue evolving in each transmitter due to the assumptions of backlogged network with saturated queues. 

	% Moreover, to improve the RACH success performance under limited channel resource, analyze the efficiency 
	
	% The goal of this paper is to model and analyze the RACH  success probability of IoT devices in each CE group for different RA schemes, which is strictly related to the overall system RACH performance. 
	% Therefore, in this paper, we develop a novel spatio-temporal mathematical framework for NB-IoT networks with three CE groups using stochastic geometry and probability theory.
	The contributions of this paper can be summarized as follows: 
	
	{
		1) We present a {novel spatio-temporal mathematical framework } for analyzing RACH access in the NB-IoT network with three CE groups using stochastic geometry and probability theory. 
		In the spatial domain, stochastic geometry is applied to model and analyze the mutual interference for each CE group. In the time domain, probability theory is applied to model the correlation of the buffer state and the transmission state over different time slots
	}

	% We focus on the approximate characterization of interference experienced by a typical IoT device in each CE group. 
	%2) As the three interference fields in three CE groups are spatially separated in the form of holes, annulus, and exclusion zones, we approximate the densities of the Poisson Hole Processes (PHPs) in three CE groups, respectively.
	% Then,
	%Based on the technical results in\cite{haenggi2012stochastic,7557010},
	%we provide the density of IoT devices in each CE group and derive the Laplace transform of interference for each CE group. 
	{2) Based on the framework, we propose a tractable approach to analyze contention-based RACH success probability of IoT devices in each CE group for different RACH schemes, including baseline, BO, ACB and hybrid ACB\&BO schemes. We first derive the exact expression for the RACH success probability of a randomly chosen IoT device in each CE group in a single time slot and then extend the analysis to multiple time slots for different RACH schemes by considering preamble transmission policy and queue evolution.
	}
	
	% Our results show that this flexible repetition parameter configuration for three CE groups in each NB-IoT cell can efficiently improve the RACH success probability of IoT devices with extended coverage.

	3) We develop a realistic simulation framework to capture the randomness locations, preamble transmission as well as the real packets arrival, accumulation, and departure of each IoT device in each time slot and verify our derived RACH success probability of the IoT device in each CE group.
	
	4) Our numerical results  presented in this paper can be applied the performance evaluation of multiple group users of other networks with spatial separations.
	% \footnote{This model can be used to analyze multiple group users of other networks, where the interference fields are spatially separated in the form of holes, annulus, or exclusion zones, such as the LoRa-based IoT networks with multiple user groups depicted in Fig. 1 in \cite{8480649},  where one cell of radius R with one gateway located at its center, and N end-devices uniformly distributed within the cell are considered. Devices located inside the annulus defined by the smaller and larger circle radii $l_{m-1}$ and $l_{m}$, respectively, have $SF_m$ and end-devices with the same Spreading Factors (SFs) are subject to collisions. 
	% We just emphasize the utilization of our framework but more analysis details about other networks will not be given in this work.}

	% The analytical model presented in this paper can also be applied for the RACH performance evaluation of other types of RACH schemes in NB-IoT networks with three CE groups by substituting its preamble transmission principle.

	The rest of the paper is organized as follows. 
	Section II  presents a  system
	model. 
	Section III derives the RACH success probability of a randomly chosen IoT device in each CE group in a single time slot. Section IV derives the RACH success probability of a randomly chosen IoT device in each CE group over multiple time slots with different RA schemes.
	Our results and simulations are described in Section V. Finally, Section VI  has drawn the conclusion.

	\section{System Model}
	We consider a traffic-aware uplink spatio-temporal model for NB-IoT networks with configuring three repetition parameters for three  CE  groups  in a cell where multiple IoT devices simultaneously start their RACH procedure after receiving a group paging message. %in a NB-IoT system configuring three repetition parameters for three  CE  groups in a cell.
	In the spatial domain, BSs and IoT devices are spatially distributed following two independent Poisson Point Processes\footnote{{Our work assumes that BSs are distributed following PPP like most of the stochastic geometry works to present a general and tractable framework for RACH analysis in the NB-IoT networks that focus on the massive connectivity. This is different from the work \cite{7218400} considering that the BSs are deployed according to cell planning in the finite networks with finite nodes.}} (PPPs) $\Phi_{\rm{B}}$ and $\Phi_{\rm{D}}$ with intensities $\lambda_{\rm{B}}$ and $\lambda_{\rm{D}}$, respectively.
	In the temporal domain, the packets arrival at each IoT device in each time slot is modeled as independent Poisson arrival process ${\Lambda _{New}}$ with intensities ${\varepsilon _{New}}$\cite{7886285}\cite{6477828}.
	Following \cite{7486114}\cite{nan2018random}\cite{7886285}, the time is slotted into discrete time slots, and the IoT devices and the BSs  remain spatially static once they are deployed.
	{Following \cite{8258982}\cite{chiu2013stochastic}, we assume each IoT device associates to its geographically nearest BS, where a Voronoi tessellation is formed. 
		Moreover, we consider additive noise with average power $\sigma ^2$ and a Rayleigh fading with the channel power gain $h$ assumed to be exponentially distributed  with unit mean, i.e., $h\sim$ Exp(1). All channel gains are assumed to be independent and identically distributed (i.i.d.) in space and time.
	}

	\begin{figure*}[htbp!]
	\centering
	\includegraphics[width=6.5in,height=2in]{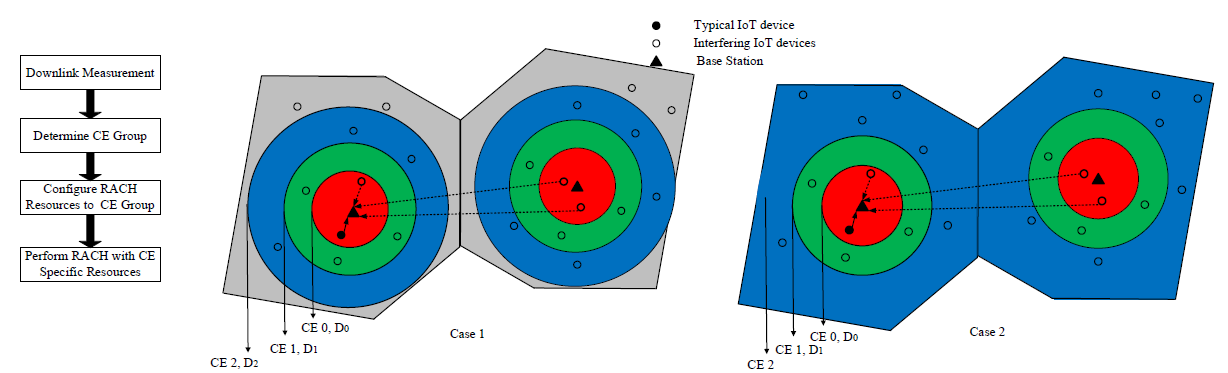}
	\caption{NB-IoT CE Groups %($\lambda_{\rm{D}}=10$ devices/km$^2$, $\lambda_{\rm{B}}=0.05$ eNBs/km$^2$, $\gamma_{th}= 0$ dB, $\alpha = 4$, and $P=20$ mW)
	}
	\label{fig:1}
\end{figure*}	
	\subsection{Problem Statement}
	As shown in Fig. 1, the IoT devices are divided into three CE groups (i.e., CE group $i$, $i$ = 0, 1 and 2) according to their downlink RSRP measurement as further discussed in Section II.D.	
	A packet can only be transmitted via the NarrowBand Physical Uplink Shared CHannel (NPUSCH), which can be scheduled by the associated BS after the active IoT device executing a RACH to request uplink channel resources with the BS as shown in Fig. 2.
	\begin{figure}[htbp!]
		\centering
		\includegraphics[width=3.6in,height=3.6in]{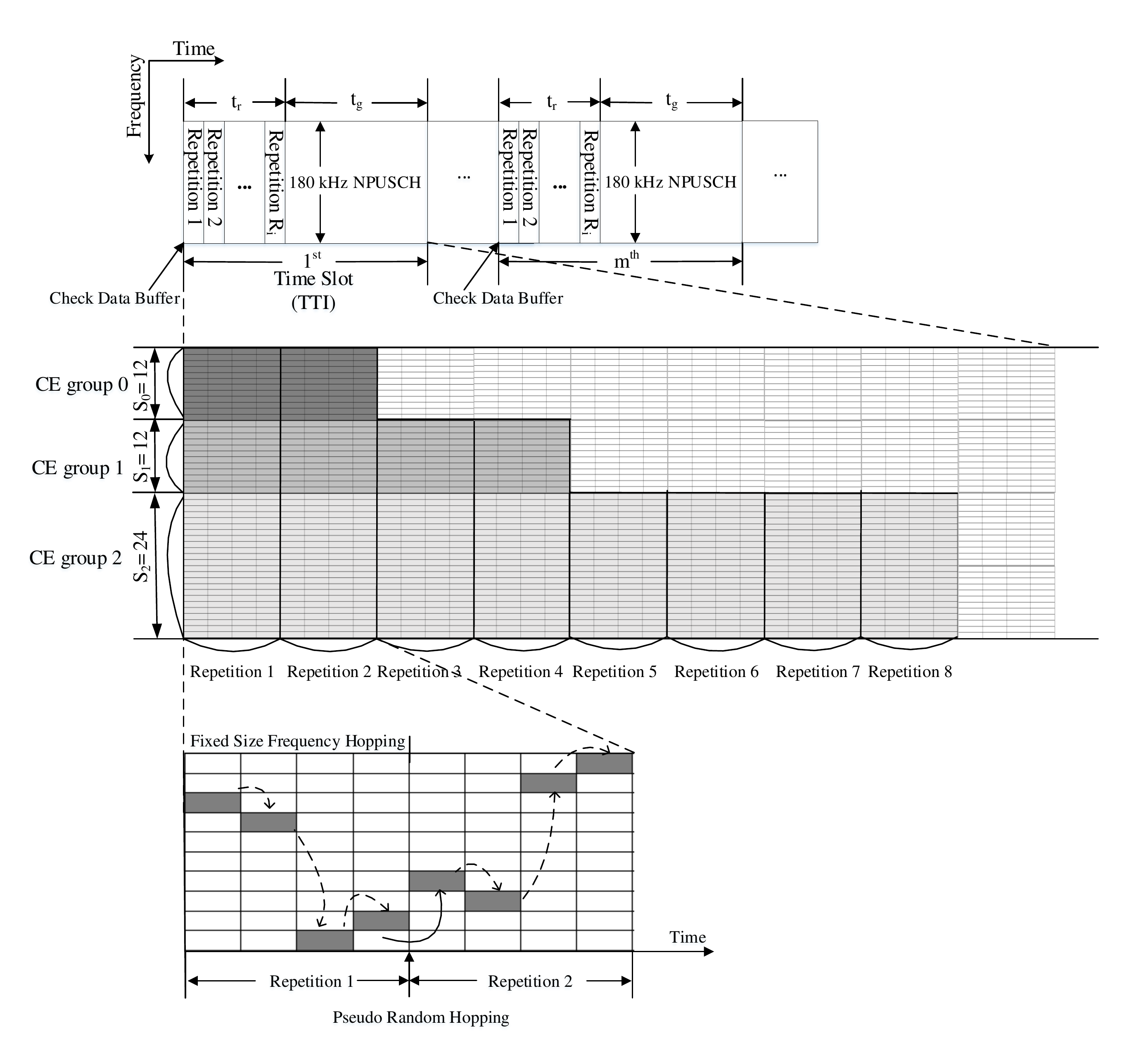}
		\caption{Structure of RACH Procedure %($\lambda_{\rm{D}}=10$ devices/km$^2$, $\lambda_{\rm{B}}=0.05$ eNBs/km$^2$, $\gamma_{th}= 0$ dB, $\alpha = 4$, and $P=20$ mW)
		}
		\label{fig:2}
	\end{figure}
	As only active IoT devices execute the contention-based RACH procedure to establish a connection with the network, we need to derive the active probability of the IoT device at the beginning of each Transmission Time Interval (TTI).
	Here, the active IoT device represents that an IoT device is with non-empty buffers and without access restriction, which will be detailed in Section III.
	Thus, we need to derive the non-empty probability $\mathcal{A}_i^m$  and the non-restrict probability $\mathcal{R}_i^m$  of the IoT device in the $m$th TTI for CE group $i$.
	As only IoT devices that has performed successful RACH transmit packets, we need to derive the RACH success probability ${\cal P}_i^m$ of the IoT device in the $m$th TTI for CE group $i$.
	In order to {analyze} the time-slotted contention-based RACH in the NB-IoT network with three CE groups, we assume that the actual intended packet transmission is always successful (i.e., the data transmission success probability is one) if the corresponding RACH succeeds. 
	Note that the data transmission after a successful RACH can be extended following the analysis of RACH success probability. 
	Here, we limit ourselves to focus on the impact of repetition scheme and CE groups to RACH procedure.

	% \subsection{Network Description}
	% Following \cite{8258982}\cite{chiu2013stochastic}, we assume each IoT device associates to its geographically nearest BS, where a Voronoi tessellation is formed. 
	% Moreover, we consider additive noise with average power $\sigma ^2$ and a Rayleigh fading with the channel power gain $h$ assumed to be exponentially distributed  with unit mean, i.e., $h\sim$ Exp(1). All channel gains are assumed to be independent and identically distributed (i.i.d.) in space and time.
	% A standard power-law path-loss model is considered in CE group 0 (described later in Section II.E), where the path-loss attenuation is defined as $x^{-\alpha}$, with the propagation distance $x$ and the path-loss exponent ${\alpha}$. 
	\subsection{Random Access Procedure}
	
	The contention-based RACH procedure consists of four steps, where a randomly selected preamble is transmitted to the associated BS on NB-IoT Physical Random Access CHannel (NPRACH), for a given number
	of times (i.e., the dedicated repetition value) in step 1, and control information with the BS is exchanged in step 2,3,4\cite{dahlman20134g}\cite{nan2018collision}.
	The RACH requests from massive connections in NB-IoT 
	simultaneously under limited number of available preambles is one of the main challenges, thus we focus on the contention of preamble in step 1 of contention-based RACH with the assumption that  {the steps} 2,3,4 of RACH are always successful whenever the step 1 is successful following\cite{nan2018collision}.
	% If step 1 in RACH fails, i.e. the associated Random Access Response (RAR) was not received, the IoT device needs to transmit another preamble in the next available RACH opportunity. 
	That is to say a RACH procedure is always successful if the IoT device successfully transmits the preamble to its associated BS. 
	In this case, the RACH success is determined by two reasons: 1) the preamble being successfully transmitted to the associated BS (i.e., received SINR is greater than the SINR threshold $\gamma_{th}$); and 2) no collision occurs (i.e., no other IoT devices successfully transmit the same preamble to the typical BS simultaneously).
	It is known that collision in step 1 of RACH can be detected by the BS, when the collided IoT devices are separable in terms of the power delay profile \cite{dahlman20134g}. 
	Our model follows the assumption of collision handling in \cite{nan2018collision}\cite{3gpp2011study}, where collision events are detected by the BS after it decodes the preambles in step 1 of RACH; hence, the BS will not send the RAR and the IoT device can not proceed to the next step of RACH procedure and need to restart the RACH procedure in the next available RACH opportunity\cite{7510814}.
	
	\subsection{Physical Random Access CHannel}
	% In  the  NPRACH,  a  preamble  is  transmitted  based  on  symbol  groups  on  a  single  subcarrier.
	As shown in Fig. 2, in the NPRACH, a preamble consists of four symbol groups transmitted without gaps on a single subcarrier and can be repeated several times using the same transmit power.
	% a preamble consists of four symbol groups transmitted without gaps on a single subcarrier.
	The subcarrier spacing of NPRACH is 3.75 kHz and up to 48 subcarriers can be allocated to NPRACH. 
	These sub-carriers are exclusively shared by three CE groups with a basic sub-carrier allocation unit of 12 sub-carriers \cite{Tel2016}. 
	Current 3GPP standard mandates the number of subcarriers in each CE group to be configured as a multiple of 12, with maximal value of 48 \cite{8385556}\cite{7785170}.
	% A preamble consists of four symbol groups transmitted without gaps and can be repeated several times using the same transmit power.
	% In each repetition, a preamble consists of four symbol groups transmitted without gaps on a single subcarrier.
	According to whitepaper\cite{schlienz2016narrowband},
	frequency hopping is applied on symbol group granularity, i.e. each symbol group is transmitted on a different subcarrier,
	%Each symbol group has a cyclicprefix  (CP)  followed  by  5  symbols. 
	%When the cell is configured with an anchor carrier and one or more nonanchor carriers, all configured CE levels must be present in each carrier (either it is anchor or nonanchor carrier). 
	%In this case, 48 subcarriers in each carrier must be “divided” to all CE levels that are configured in the cell. 
	% In this paper, we  consider multiple IoT devices simultaneously initiate  their RA procedure after receiving a group paging message in  a  NB-IoT system configuring three repetition parameters and  NPRACH resource configurations for three  CE  groups in a cell.
	where the first preamble symbol group is transmitted via a subcarrier selected via the pseudo-random hopping (i.e., the hopping depends on the current repetition time and the Narrowband physical Cell ID, a.k.a NCellID\cite{schlienz2016narrowband}), and the following three preamble symbol groups are transmitted via subcarriers determined by the fixed size frequency hopping \cite{Tel2016} (i.e. each symbol group is transmitted on a different subcarrier) as shown in Fig. 2. 
	% This frequency hopping algorithm is designed in a way that different selections of the first subcarrier lead to hopping schemes which never overlap. %{\color{red}
	%Hence there are as many different congestion free preambles as there are subcarrier allocated to the NPRACH.} 
	That is to say, if two or more IoT devices choose the same first subcarrier in a single RACH opportunity, the following subcarriers (i.e., in the same RACH opportunity) would be same, due to that these two hopping algorithms lead to one-to-one correspondences between the first subcarrier and the following subcarriers (i.e., these IoT devices either collide on the full set or not collide at all in a single RACH opportunity). 

	\subsection{CE Group Determination}
	As shown in Fig. 1, the IoT device determines its CE group by measuring the downlink RSRP.
	In this paper, we use the Signal-to-Noise Ratio (SNR) thresholds following\cite{7569029}.
	% NPRACH duration in each CE group is set to accommodate this repetition.
	In the following subsections, we describe and formulate the coverage area of each CE group,  the preamble set as well as repetition value, the density, and the uplink transmit power of IoT devices in each CE group.

	%For the sake of generality, we suppose that the BS provides three different access classes.
	%in particular, the $i$-th class (with 0 ≤ $i$ ≤ 2) uses their own reserved time slots (i.e., RACH$_i$). 
	\subsubsection{Coverage Area of Each CE Group}
	According to \cite{name2015}, 
	%BS sends information of each CE group related to different modulation and coding schemes (MCSs) and different time intervals.
	the BS uses the constant power $P_{\rm DL}$ to broadcast the Downlink Control Information (DCI) signal to all the IoT devices in its own cell.
	Based on the received SNR of DCI signal measured at each IoT device and the SNR thresholds \{$\delta_{1}$, $\delta_{2}$\}, each IoT device independently determine its associated CE group following the rule below:
	%as the minimum signal to noise ratio (SNR) value required at the IoT device in CE group 0 and 1, respectively, 
	% \begin{align}
	% \hspace{-0.5cm}\begin{cases}
	% \displaystyle{{{P_{\rm DL}}{x^{ - \alpha }}}} \ge {\displaystyle\delta _{1}}, &\mbox{device belongs to CE group 0},\\
	% {\displaystyle\delta _{2}} \le \displaystyle{{{P_{\rm DL}}{x^{ - \alpha }}}} < {\delta _{1}},&\mbox{device belongs to CE group 1},\\
	% \displaystyle{{{P_{\rm DL}}{x^{ - \alpha }}}} < {\delta _{2}},&\mbox{device belongs to CE group 2,}
	% \end{cases}
	% \end{align}
	\begin{align}\label{AREA}
	\hspace{-0.5cm}\begin{cases}
	\displaystyle\frac{{{P_{\rm DL}}{x^{ - \alpha }}}}{\omega } \ge {\displaystyle\delta _{1}}, &\mbox{device belongs to CE group 0},\\
	{\displaystyle\delta _{2}} \le \displaystyle\frac{{{P_{\rm DL}}{x^{ - \alpha }}}}{\omega } < {\delta _{1}},&\mbox{device belongs to CE group 1},\\
	\displaystyle\frac{{{P_{\rm DL}}{x^{ - \alpha }}}}{\omega } < {\delta _{2}},&\mbox{device belongs to CE group 2,}
	\end{cases}
	\end{align}
	where $\omega$ is the noise power at the IoT device and $x$ is the IoT device’s distance from the BS. 
	The devices with the lowest received powers (less than $\delta_{2}$) belong to the group 2 and the BS need to allocate higher repetition value to this group; the ones with the highest received powers (more than $\delta_{1}$) belong to the group 0 and the BS can allocate lower repetition value to this group to allow fairness performance among three CE groups. 
	
	It is worth noticing that $\delta_{1}$ and $\delta_{2}$ depend on the particular modulation and coding scheme (MCS) used by the BS to broadcast $i$th CE group DCI, as well as the expected QoS level and propagation environment. Thus, the maximum distance $D_i$ between the BS and an IoT device belonging to the $i$th CE group can be derived from (\ref{AREA}) as
	% \begin{align}\label{T_i_2}
	%  \begin{cases}  
	% {D_0} = {\left( {\displaystyle\frac{{{\delta _{1}} }}{P_{DL}}} \right)^{ - (1/\alpha )}},\\
	% {D_1} = {\left( {\displaystyle\frac{{{\delta _{2}} }}{P_{DL}}} \right)^{ - (1/\alpha )}}.
	%  \end{cases}
	% \end{align}
	\begin{align}\label{T_i_2}
	\begin{cases}  
	{D_0} = {\left( {\displaystyle\frac{{{\delta _{1}}\omega }}{P_{DL}}} \right)^{ - (1/\alpha )}},\\
	{D_1} = {\left( {\displaystyle\frac{{{\delta _{2}}\omega }}{P_{DL}}} \right)^{ - (1/\alpha )}}.
	\end{cases}
	\end{align}
	
	Specifically, we define Coverage Area (CA${_i}$) as the area in which the CE group $i$ IoT devices located in.
	%and $N_i$ as the number of CE group $i$ IoT devices. 
	As shown in Fig. 1, CA${_0}$ is represented by a circle centered at the BS with radius $D_0$; CA${_1}$ is represented by a annulus centered at the BS with internal radius $D_0$
	and external radius $D_1$; CA${_2}$ is represented by a annulus centered at the BS with internal radius $D_1$ and external radius $D_2$, where $D_0$, $D_1$ are given in \eqref{T_i_2}
	and $D_2$ is given following \cite{6516885} as
	\begin{align}\label{D2}
	D_2= 1/\sqrt {\pi {\lambda _B}}.
	\end{align}
	In consequence, an IoT device belongs to CE group $i$ if it is located in the coverage area CA${_i}$.

	\subsubsection{{Preamble Set and Repetition Value Configured for IoT devices in Each CE Group}}
	To serve IoT devices in three CE groups, the NB-IoT network can configure three NPRACH resource configurations for each CE group  in a cell separately.
	The BS will notify the NPRACH configuration to the IoT device in the system information by broadcasting, which include the preamble set and preamble repetition value required for the estimated CE group as well as preamble transmit power.
	According to the 3GPP standard\cite{name2015}\cite{schlienz2016narrowband}, 
	we set $S_i$ as the number of orthogonal subcarriers (preambles) reserved by the BS for CE group $i$ ( $S_0+S_1+S_2≤48$) with  configuration sets $\{S_0, S_1, S_2\}  \in \{ \{12,12,24\}, \{12,24,12\}, \{24,12,12\} \}$. Thus, each preamble in CE group $i$ has an equal probability
	$(1/{S_i})$ to be chosen.
	{
		IoT devices in CE group $i$ transmit the chosen preamble from set $S_i$ using the same transmit power for ${K_i}$  times, where
		the repetition value specified for each configuration can be chosen from the sets $K_0 \in \{1, 2\}$ and $K_1,K_2 \in \{4, 8, 16, 32, 64, 128\}$.}
	The preamble repetition value of the higher CE group is usually larger than that of lower CE group, i.e., $R_0<R_1<R_2$.
	
	\subsubsection{Uplink Transmit power of IoT devices in Each CE Group}
	
	%Combining the different retransmissions allows a coverage extension.
	Based on the 3GPP standard\cite{Tel2016}\cite{name2015}, in the uplink, the transmit power depends on a set of cell specific parameters and UE measured parameters.
	Specifically, the transmit power in CE group 0 is determined by the path-loss inversion power control, where each IoT device compensates for its own path-loss to keep the average received signal power equal to the same threshold $\rho$. 
	A standard power-law path-loss model is considered in CE group 0, where the path-loss attenuation is defined as $x^{-\alpha}$, with the propagation distance $x$ and the path-loss exponent ${\alpha}$.
	%A standard power-law path-loss model is considered in CE group 0 (described later in Section II.E), where the path-loss attenuation is defined as $x^{-\alpha}$, with the propagation distance $x$ and the path-loss exponent ${\alpha}$.
	The transmit power in CE group 1 and 2 is generally a fixed power $P$ (the cell specific fixed transmit power on slot).
	Therefore, the transmit power of an IoT device in the CE group $i$ can be expressed as 
	\begin{align}\label{P_i}
	\begin{cases}  
	P_{0,j}=\rho (r_{0,j})^{\alpha}, & i\mbox{ = 0},\\
	P_{i,j}=P, & i\mbox{ = 1, 2},
	\end{cases} 
	\end{align}
	where $r_{0,j}$ is the distance from the $j$th IoT device in CE group 0 to the typical BS.

	\subsubsection{Density of IoT Devices in Each CE Group}
	{Note  that  IoT  devices  in  the  same CE group  may  choose the same preamble from the same preamble set $S_i$ during step 1 in RACH procedure, and only the IoT devices choosing the same preamble will generate interference\footnote{{As shown in Fig. 1, we consider intra-group interference, i.e., the interference from the IoT devices choosing the same preamble in the same group associated with the same BS. We also consider the inter-group interference, i.e., the interference from the IoT devices in other cells choosing the same preamble, due to that the IoT devices in different cells share the preamble sequence pool among BSs. In this work, each cell configures the same subcarrier set to each group. For example, group 0 in cell one and group 0 in cell two are configured with the same orthogonal subcarrier set $S_0$. That is to say, the configuration of each cell is the same as each other.}}.
	It is necessary to derive the density of  IoT  devices   choosing  the  same  preamble in  each CE  group.}
	% Our analysis can be easily extended to take into account the inter-cell interference following our previous works \cite{nan2018random}\cite{nan2018collision}.}
	Note that the spatial correlations among the interfering IoT devices on the aggregate interference are ignored \cite{7917340}. In fact, the exact locations and the mutual spatial correlations of the interfering IoT devices are of less significance to the SINR distribution at the BS. Instead, the density (number) of the IoT devices along with their relative locations with respect to the BS are the main contributions that affect the SINR.
	{We approximate the interfering devices of each CE group by the PPP ${\Phi _i}$ with the  density $\lambda_i$ in the following \textbf{Lemma 1} \cite{7557010}}.
	%We first approximate the interference field ${\Phi _i}$ of each CE group in the following \textbf{Lemma 1} \cite{7557010}. 
	\begin{lemma}
		{(Approximation) We approximate the interfering devices of each CE group by the PPP ${\Phi _i}$ with the  density $\lambda_i$ given as }
		{
			\begin{align}\label{T_i}
			\begin{cases}  
			{\lambda _0} \approx  g_0{\lambda _D}=(1 - \exp ( - {\lambda _B}\pi D_0^2)){\lambda _D},\\
			{\lambda _1} \approx  g_1{\lambda _D}=(\exp ( - {\lambda _B}\pi D_0^2) - \exp ( - {\lambda _B}\pi D_1^2)){\lambda _D},\\
			{\lambda _2} \approx  g_2{\lambda _D}=(\exp ( - {\lambda _B}\pi D_1^2) - \exp ( - {\lambda _B}\pi D_2^2)){\lambda _D},  \\ \hspace{2.2in} (for\ Case\  \textit{1} ), \\ 
			{\lambda _2} \approx  g_2{\lambda _D}=(\exp ( - {\lambda _B}\pi D_1^2) ){\lambda _D},  \hspace{0.15in} (for\ Case\  \mbox{2}), 
			\end{cases} 
			\end{align}}
		where $g_0$, $g_1$, $g_2$ are thinning probabilities, $D_0$, $D_1$ are given in  \eqref{T_i_2} and $D_2$ is given in \eqref{D2}.
		{As the Voronoi cells do not have a constant radius, we consider two cases to analyze the CE group 2 respectively: Case 1, set  the external radius to  CE group 2 as $D_2$; Case 2, set the external radius of  CE group 2 equals to the Voronoi cell radius as shown in Fig. 1.}
	\end{lemma}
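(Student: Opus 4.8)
The plan is to identify the thinning probability $g_i$ with the probability that a typical IoT device of $\Phi_{\rm D}$ lies in the coverage area of CE group $i$, and then to invoke the independent-thinning property of a PPP. First I would condition on a typical device and let $R$ denote its distance to the serving (nearest) BS. Since $\Phi_{\rm B}$ is a homogeneous PPP of intensity $\lambda_{\rm B}$, the event $\{R > r\}$ coincides with the event that the disk $b(o,r)$ of radius $r$ centred at the device contains no point of $\Phi_{\rm B}$; the void probability therefore gives $\Pr(R > r) = \exp(-\lambda_{\rm B}\pi r^2)$, and $R$ has density $f_R(r) = 2\pi\lambda_{\rm B} r \exp(-\lambda_{\rm B}\pi r^2)$.

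By the CE-group rule \eqref{AREA}, with boundary radii $D_0$ and $D_1$ from \eqref{T_i_2} and $D_2$ from \eqref{D2}, a device belongs to CE group $i$ exactly when $R$ falls in the corresponding annular interval. Hence $g_0 = \Pr(R \le D_0) = 1 - \exp(-\lambda_{\rm B}\pi D_0^2)$ and $g_1 = \Pr(D_0 < R \le D_1) = \exp(-\lambda_{\rm B}\pi D_0^2) - \exp(-\lambda_{\rm B}\pi D_1^2)$, while for CE group 2 one has $g_2 = \Pr(D_1 < R \le D_2) = \exp(-\lambda_{\rm B}\pi D_1^2) - \exp(-\lambda_{\rm B}\pi D_2^2)$ in Case 1, and $g_2 = \Pr(R > D_1) = \exp(-\lambda_{\rm B}\pi D_1^2)$ in Case 2, where the outer boundary is taken to be the edge of the Voronoi cell. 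Multiplying each $g_i$ by $\lambda_{\rm D}$ reproduces \eqref{T_i}.

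The remaining step is the thinning theorem: if every point of $\Phi_{\rm D}$ were retained independently with probability $g_i$, the retained points would form a PPP of intensity $g_i\lambda_{\rm D}$. This is exactly where the statement becomes an approximation and where the main obstacle lies, since the CE-group label is not an independent mark of $\Phi_{\rm D}$: the labels of two nearby devices are positively correlated through their common serving BS, so the true point process of CE-group-$i$ devices is a Cox/cluster-type process rather than a PPP. I would therefore not try to establish exactness; instead, following the rationale recalled just before the lemma and in the cited works, I would argue that the SINR observed at the BS is governed by the density of the same-preamble interferers and their distances to the BS rather than by the fine mutual correlations among them, so replacing the interference field of group $i$ by a PPP of the matched intensity $g_i\lambda_{\rm D}$ incurs only negligible error — which is all the lemma claims.
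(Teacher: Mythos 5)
Your proposal is correct and follows essentially the same route as the paper's Appendix A: the paper computes, via the void probability of $\Phi_{\rm B}$ together with the PGFL and Campbell's theorem, the mean number of devices of $\Phi_{\rm D}$ falling in the union of radius-$D_i$ disks around the BSs, which is exactly your computation of $g_i=\Pr(D_{i-1}<R\le D_i)$ from the nearest-BS distance distribution followed by multiplication by $\lambda_{\rm D}$. Your added remark that the group labels are not independent marks, so the thinned process is only approximately a PPP, is a correct reading of why the lemma is stated as an approximation.
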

	\begin{proof}
		See Appendix A.
	\end{proof}

	%\begin{proof}
	%See Appendix A.
	%\end{proof}
	%It is worth noticing that the overall system can be modeled as �� subsystem respecting the constraint expressed in (3). We assumed that each device sends only one by one message, whose time is exactly equal to the slot duration. Each class can be independently modeled as an embedded Markov process, according to [15], where, at the beginning of a generic time slot a ��-th class device can be in the following states: 

	%The RACH procedure of each CE group has been modeled as a modified slotted ALOHA protocol, where $N_i$ number of IoT devices contend over a set of time resource assigned by BS. 
	The density of IoT devices in CE group $i$ choosing the same preamble can be expressed according to the thinning process as\cite{kingman1993poisson}
	\begin{align}\label{density}
	{\lambda _{i}^a} = {\lambda _i}/{S_i}.
	\end{align}
	
	\subsection{Traffic Model}
	We consider a time-slotted NB-IoT network, where the channel resource assignment of NPRACHs only occurs at the beginning of a TTI as shown in Fig. 3.
	According to the 3GPP standard \cite{schlienz2016narrowband}, the NPRACH happens at the beginning of a time slot within a small interval duration ${t _r}$, and the rest of a time slot is a gap interval duration ${t _g}$ for data transmission. 
	Without loss of generality, we assume that each IoT device is equipped with an infinite size buffer to store data packets received from higher layers. 
	We model the new arrived packets in the $m$th time slot $N_{New}^m$ at each IoT device as independent Poisson arrival process ${\Lambda _{New}}$ with intensities ${\varepsilon _{New}}$ as \cite{6477828}\cite{6376046}. 
	Therefore, the number of new packets $N_{New}^m$ in the $m$th time slot is described by the Possion distribution with $N_{New}^m \sim Pois(\mu _{New}^m)$, where $\mu _{New}^m = ({t _r} + {t _g})\varepsilon _{New}^m$.
	% Once the RACH succeeds, the IoT device will transmit the corresponding packet with the scheduled uplink channel resource. %Here we interchangeably use packet to represent the data sequences.
	% Note that the data transmission after a successful RACH can be easily extended following the analysis of RACH success probability. 
	% Due to the main focus of this paper is analyzing the time-slotted contention-based RACH in NB-IoT network, we assume that the actual intended packet transmission is always successful if the corresponding RACH succeeds.
	Packets are transmitted according to a First Come First Serve
	(FCFS) rule \cite{gow2006mobile} and a packet is dropped from the IoT device buffer once the RACH succeeds.
	Otherwise, the packet is kept in the buffer in the first place of the queue, and the IoT device will try to request channel resource for the packet in the next available RACH.
	Therefore, the number of accumulated packets in the $m$th time slot $N_{Cum}^m$ is evolved following transmission condition over time, which has been detailed and analyzed in our previous work \cite{nan2018random}.
	At the beginning of the NPRACH in each time slot, each IoT device needs to check its buffer status to determine whether itself requires to attempt RACH. 
	% During the PRACH duration, each active IoT device transmits a preamble to its associated BS to request uplink channel resource for packets transmission. Here, the active IoT device represents that an IoT device is with non-empty buffers and without access restriction, which will be detailed in the following section.
	In detail, the buffer status is determined by the new arrived packets and the accumulated packets that unsuccessfully departs before the last time slot. 
	% \begin{table}[htbp!]
	% 	\centering
	% 	\caption{Packets Evolution of the IoT Device}
	% 	{\renewcommand{\arraystretch}{1.3}
	% 		%\renewcommand{\tabcolsep}{0.15cm}
	% 		\begin{tabular}{|c|c|c|}
	% 			\hline
	
	% 		Slot & Success & Failure   \\ \hline
	% 		1st &  $N_{Cum}^1 = 0$ & $N_{Cum}^1 = 0$   \\ \hline
	% 			2nd &  $N_{Cum}^2 = N_{New}^1-1$ & $N_{Cum}^2 = N_{New}^1$    \\ \hline
	% 				3rd &  $N_{Cum}^3 = N_{Cum}^2+N_{New}^2-1$ & $N_{Cum}^3 = N_{Cum}^2+N_{New}^2$ \\ \hline
	% 		\vdots   & \vdots & \vdots  \\ \hline
	% 			$m$th &  $N_{Cum}^m = N_{Cum}^{m-1}+N_{New}^{m-1}-1$ & $N_{Cum}^m = N_{Cum}^{m-1}+N_{New}^{m-1}$ \\ \hline
	% 		\end{tabular}
	% 	}
	% 	\label{table_accord}
	% \end{table}
	
	\begin{table}[htbp!]
		\centering
		\caption{Notation Table}
		{\renewcommand{\arraystretch}{1.2}
			%\renewcommand{\tabcolsep}{0.15cm}
			%	\begin{tabular}{|c|c|c|c|}
			\begin{tabular}{|p{1cm}|p{7cm}|}
				\hline
				
				$\lambda_B$ & The intensity of BSs \\ \hline 
				$\lambda_D$ & The intensity of IoT devices  \\ \hline
				${\varepsilon _{New}}$ & The intensity of new arrival packets \\ \hline
				$h$ & The Rayleigh fading channel power gain  \\ 
				\hline
				${r}$ & The distance between an IoT device and its associated BS \\ \hline 
				$\alpha$ & The path-loss exponent  \\ 
				\hline
				${K_i}$ & The RACH repetition value of CE group $i$ \\ \hline  
				$P_{DL}$ & The downlink transmit power  \\ 
				\hline
				$\delta_{1}$, $\delta_{2}$  & The Target SNRs \\ \hline 
				$\omega$ & The noise  power in the downlink\\ 
				\hline
				$CA_i$ &  The area of the CE group $i$ \\ \hline $D_i$ &  The radius of the $CA_i$  \\ 
				\hline
				$\lambda_i$ &  The intensity of IoT devices in CE group $i$  \\ \hline
				$N_i$ &  The number of {intra-group} interfering IoT devices in CE group $i$  \\ 
				\hline
				$S_i$ &  The number of available preambles in CE group $i$ \\ \hline 
				$\rho$ & The full path-loss power control threshold \\ 
				\hline				
				$P_{i,j}$ & The uplink transmit power of the device in CE group $i$  \\ \hline
				$\lambda _{i}^a$ & The average intensity of IoT devices using the same preamble in CE group $i$  \\ 
				\hline
				$\mu _{New}^t $ & The intensity of new arrival packets in the $t$th time slot \\ \hline
				$ N _{New}^t$ & The number of new arrived packets in the $t$th time slot\\
				\hline
				${\tau _r}$ & The PRACH duration \\ \hline
				${\tau _g}$ & The gap interval duration between two RACHs\\
				\hline	
				$Q_{ACB}$ & The ACB factor with the ACB scheme \\ \hline 
				$T_{BO}$ & The BO factor with the BO scheme  \\
				\hline	
				$\gamma _{th}$ & The SINR threshold \\ \hline
				$c$ & $c = 3.575$ is a constant \\
				\hline 
				$\mathcal{A}_i^t$ & The non-empty probability of each IoT device in the $t$th time slot for CE group $i$ \\ \hline 
				$\mathcal{R}_i^t$ & The non-restrict probability of each IoT device in the $t$th time slot for CE group $i$\\
				\hline
				$\mu_{Cum}^t$ & The intensity of accumulated packets in the $t$th time slot \\ \hline
				$ N_{New}^t$ & The number of accumulated packets in the $t$th time slot\\
				\hline
				$I_i$ & The aggregate interference for CE group $i$ \\ \hline 
				$\sigma^2$ & The noise power in the uplink \\
				\hline

			\end{tabular}
		}
		\label{table_accord}
	\end{table}
	\subsection{Transmission Schemes}

	In the NB-IoT network, a large number of IoT devices try to access the network simultaneously, which leads to a low RACH success probability and high network congestion due to mass concurrent data and signaling transmission\cite{3gpp2011study}.
	This may cause unexpected delays, packet loss, traffic overload, waste of radio resources, extra energy consumption, and even service interruption.
	In this case, efficient RACH transmission mechanisms are required for congestion reduction. 
	In this paper, we focus on evaluating and comparing the RACH performance of NB-IoT network via four RACH schemes:
	\subsubsection{Baseline scheme} According to \cite{7078932}, each IoT device attempt RACH immediately when there exits packets in the buffer. This is the simplest scheme without any control of traffic.
	\subsubsection{Access Class Barring (ACB) scheme} According to 3GPP standard \cite{3gpp2011study},  the ACB scheme has been standardized to prevent IoT devices from overloading RACH. 
	In ACB mechanism, initially a BS broadcasts an access barring factor ${\rm Q}_{ACB}$, which is specified by the BS according to the network condition\cite{dahlman20134g}\cite{3gpp2011study}. 
	When an IoT device initiates RACH, the device draws a random number $q \in [0,1]$, and compares this with ${\rm Q}_{ACB}$. 
	If $q < {\rm Q}_{ACB}$, the device is allowed to perform RACH procedure. 
	ACB scheme is a basic congestion control method that reduces RACH attempts from the side of IoT devices based on the ACB factor.
	\subsubsection{Back-off (BO) Scheme} BO scheme is introduced in 3GPP standard  \cite{3gpp2011study} to delay RACH attempts of IoT devices. 
	According to \cite{6525600}, each IoT device transmits packets the same as baseline scheme when there is no failure in the last time slot. 
	However, if a RACH fails in the $m$th time slot, the IoT device will perform the next RACH trial in the ($m+T_{BO}+1$)th time slot after a backoff period $T_{BO}$ time slots, where $T_{BO}$ is specified by the Backoff Indicator (BI).
	
	\subsubsection{ACB\&BO Scheme}
	The ACB\&BO scheme combined the ACB and BO schemes together. The BS first broadcasts the ACB factor $Q_{ACB}$, and then each active IoT
	device attempts a RACH with probability $Q_{ACB}$,
	i.e., each IoT device defers its RACH and waits for $T_{BO}$ time slots with probability ($1-Q_{ACB}$) if a RACH fails.
	
	The main notations of the proposed protocol are summarized in TABLE I.

	%\subsubsection{Power Ramping Scheme}: Each IoT device transmits packets same as baseline scheme. However, if RACH fails, the deferred packet will be favored by stepping up the transmit power after each unsuccessful RA attempt. Specifically, if a RA attempt fails, the IoT device uses the full path-loss inversion power control to maintain the average received power at a higher power level in the next RA attempt, where 

	\section{General Single Time Slot Model}
	%\subsection{General Single Time Slot Model}
	This section presents a general single time slot analytical model to characterize the  RACH  success probability of a randomly chosen IoT device in each CE group with different RACH schemes. 
	We formulate the RACH  success probability taking into account both the preamble outage and the collision. The RACH success probability ${\cal P}_i^1$ is defined as
	% \begin{align}\label{rach_1}
	% {{\cal P}_0}&={{\mathbb{E}}_N}{\Big[ { {{{\mathbb P}_{S,0}}[{N_T}]}{\prod\limits_{j = 1}^{n} {\Big(1 - {{\mathbb P}_{S,j}}[{N_T}]\Big)}\Big|N=n }} \Big]}\nonumber\\
	% &=\sum\limits_{n = 0}^\infty\Big\{\underbrace{{\mathbb P}[N = n]}_{{\rm I}} { {\underbrace {{{\mathbb P}_{S,0}}[{N_T}]}_{{\rm I}{\rm I}}\underbrace {\prod\limits_{j = 1}^{n} {\Big(1 - {{\mathbb P}_{S,j}}[{N_T}]\Big)}\Big|N=n }_{{\rm I}{\rm I}{\rm I}}} \Big\}},
	% \end{align}
	{
		\begin{align}\label{RACH}
		&{{\cal P}_i^1}={{\mathbb{E}}_N}{\Big[ { {{{\mathbb P}_{S,i,0}}[{K_i}]}{\prod\limits_{m = 1}^{n_i} {\Big(1 - {{\mathbb P}_{S,i,m}}[{K_i}]\Big)}\Big|N_i=n_i }} \Big]}\nonumber\\
		&=\sum\limits_{n_i = 1}^\infty  {\Big\{ {\underbrace {{\mathbb P}[N_i = n_i]}_{\rm I}\underbrace {{{{{\mathbb P}_{S,i,0}}[K_i}}]}_{{\rm I}{\rm I}}\underbrace {\prod\limits_{m =1}^{n_i} {\Big(1 - {{{\mathbb P}_{S,i,m}}[{K_i}}]\Big)}\Big|N_i=n_i }_{{\rm I}{\rm I}{\rm I}}} \Big\}}.
		\end{align}}
	Part \Rmnum{1} is the probability that the number of {intra-group} interfering IoT devices in CE group $i$ for a typical BS is equal to $n_i$, part \Rmnum{2} is the preamble transmission success probability of the typical IoT device in CE group $i$, and part \Rmnum{3} is the preamble transmission failure probability that the preambles transmitting from other $n_i$ {intra-group} interfering IoT devices in CE group $i$ are not successfully received by the BS,  i.e., the non-collision probability of the typical IoT device conditioning on $n_i$.
	%Hence, the performance metrics will be derived based on estimated average number of contending, successful, and collided devices in each NPRACH for all CE levels.

	{The  randomly  chosen IoT device transmits a preamble successfully if any repetition successes, and in a single repetition, a preamble is successfully received at the associated eNB if its all four received SINRs are above the SINR threshold $\gamma_{th}$. }
	Thus, the preamble transmission success probability of a randomly chosen IoT device in CE group $i$  under ${K_i}$ repetitions conditioning on $n_i$ number of {intra-group} interfering IoT devices is derived as
	\begin{align}\label{preamble1}
	{{{{\mathbb P}_{S,i,0}}[K_i}}] = 1 - \prod\limits_{{{k_i}} = 1}^{{K_i}} {\Big( {1 - {{\mathbb P}_{i,0}}[{\theta _{{k_i}}}| {r_{i,0}} ]} \Big)}, 
	\end{align}
	where $N_i=n_i$ is the number of {intra-group} interfering IoT devices in CE group $i$ (i.e., using the same preamble as the typical IoT device simultaneously in CE group $i$ in the same cell), {$r_{i,0}$ is the distance from the typical IoT device in CE group $i$ to its associated  BS}, and
	\begin{align}\label{FOUR SINR}
	{\theta _{{k_i}}} = \Big\{&{\rm{SINR}}_{{{k_i}}}^1 \ge {\gamma _{th}}, {\rm{SINR}}_{{{k_i}}}^2 \ge {\gamma _{th}},\\ \nonumber &{\rm{SINR}}_{{{k_i}}}^3 \ge {\gamma _{th}},{\rm{SINR}}_{{{k_i}}}^4\ge {\gamma _{th}} \Big\}.
	\end{align}
	In \eqref{FOUR SINR}, $\gamma _{th}$ is the SINR threshold, SINR$_{{{k_i}}}^1$, SINR$_{{{k_i}}}^2$, SINR$_{{{k_i}}}^3$, and SINR$_{{{k_i}}}^4$ are the received SINRs of the four continuous symbol groups in the ${k_i}$th repetition.
	
	Based on the Binomial theorem,  \eqref{preamble1} can be rewritten as \begin{align}\label{SI}
	{{{\mathbb{P}}_{S,i,0}}[{{K_i}}}]&= \sum\limits_{{{{k_i}}} = 1}^{{{{K_i}}}} {{( - 1)}^{{{{k_i}}} + 1}} {\Big( \begin{array}{l}
		{{{K_i}}}\\{{{k_i}}}\end{array}  \Big)}
	{{\mathbb{P}}_{i,0}}[{\theta _1},{\theta _2}, \cdots ,{\theta _{{{{k_i}}}}}| {r_{i,0}} ] ,
	\end{align}
	where $\Big( \begin{array}{l}
	{{K_i}}\\
	{{k_i}}
	\end{array} \Big) =\displaystyle \frac{{{{K_i}}!}}{{{{k_i}}!( {{{K_i}} - {{k_i}}} )!}}$ is the binomial coefficient, and $\displaystyle{{\mathbb P}_{i,0}[{\theta _1},{\theta _2}, \cdots , \cdots {\theta _{{{k_i}}}}| r_{i,0} ]}$ is the probability that all of $4\times {k_i}$ (i.e., a preamble consists of four preamble symbol groups) preamble symbol groups are successfully transmitted.

	As the  BSs and IoT devices are static all time once they are deployed, the locations of active IoT devices are slightly correlated across time.
	However, the random preamble selection as shown in Fig. 2 randomizes the set of interfering devices over different TTIs , which decorrelates the interference across time, and thus we approximate the distributions of active IoT devices are independent in each TTI following\cite{nan2018random}.
	We ignore the time correlation between each repetition in each TTI due to that the duration of the repetition (6.4 ms) is long enough\cite{name2015}\cite{schlienz2016narrowband}, but we consider the time correlation between the four continuous symbol groups in each repetition.
	
	According to the approximation of the  density of the IoT devices in each CE group in \textbf{Lemma 1}, the Probability Mass Function (PMF) of the number of {intra-group} interfering IoT devices in CE group $i$ in the same cell, i.e., part \Rmnum{1} in  \eqref{RACH} is represented as\cite[Eq.(3)]{yu2013downlink}
	\begin{align}\label{N=n}
	\mathbb{P}[N_i = n_i] = \displaystyle\frac{{{c^{(c + 1)}}\Gamma (n_i + c + 1){{\big( {\displaystyle{{{\mathcal{A}_i^1\mathcal{R}_i^1\lambda _{i}^a}}}/{{{\lambda _B}}}} \big)}^{n_i}}}}{{\Gamma (c + 1)\Gamma (n_i + 1){{\big( {\displaystyle{{{\mathcal{A}_i^1\mathcal{R}_i^1\lambda _{i}^a}}}/{{{\lambda _B}}} + c} \big)}^{n_i + c + 1}}}},
	\end{align}
	where $\lambda_{i}^a$ is given in  \eqref{density}, $c = 3.575$ is a constant related to the approximate PMF of the PPP Voronoi cell, $\Gamma (\cdot)$ is the gamma function, and $\mathcal{A}_i^1\mathcal{R}_i^1$ is the active probability of each IoT device in CE group $i$ in the 1st time slot, where $\mathcal{A}_i^1$ is the non-empty probability (i.e., IoT device buffer is non-empty) and $\mathcal{R}_i^1$ is the non-restrict probability (i.e., IoT device does not defer its access attempt due to RACH scheme).
	
	It is noted that in the 1st time slot, the queue status (number of accumulated packets) of each IoT device only depends on the new packets arrival process ${\Lambda _{New}}$, so we have
	\begin{align}
	\mathcal{A}_i^1= \mathbb{P}\{ N_{New}^1 > 0\}  = 1 - {e^{ - \mu _{New}^1}},   
	\end{align}
	where $\mu_{New}^1$ is the intensity of new arrival packets. 
	Note that the non-restrict probability $\mathcal{R}_i^1$ in the 1st time slot is determined by transmission policies for different RACH schemes, which will be detailed in Section IV..

	In order to solve the RACH success probability of  a randomly chosen IoT device in each CE group, we focus on analyzing the preamble transmission success probability presenting in  \eqref{SI} for three CE groups in the following subsections.
	\subsection{CE Group 0, ${K_0}\le$ 2 (i = 0)}
	%Next, we derive the preamble transmission success probability presenting in   \Rmnum{2} of  \eqref{RACH}.
	% Based on Slivnyak’s theorem \cite{haenggi2012stochastic}, the SINR received at the BS located at the origin can be written as
	The SINR received at the typical BS can be written as
	{
		\begin{align}\label{SINR1}
		{\rm{SINR}} =\displaystyle\frac{\rho h_0}{{\mathcal {I}_0^{{\mathop{ intra}} } +\mathcal {I}_0^{{\mathop{ inter}} }+ {\sigma ^2}}} = \displaystyle\frac{\rho h_0}{{\mathcal {I}_0 + {\sigma ^2}}},
		\end{align}}
	where $\sigma ^2$ is the noise power at the BS, $\mathcal {I}_0$ is the aggregate  interference of the randomly chosen IoT device in CE group 0  and is given as 
	{
		\begin{align}\label{INTRA0}
		\mathcal {I}_{0} = \sum\limits_{j \in {\mathcal {Z}_0^{}}}^{} {P_{0,j}{h_{0,j}}{(r_{0,j})}^{ - \alpha }}. 
		\end{align}}
	In  \eqref{INTRA0},  {$\mathcal Z_0$} is the set of  interfering IoT devices for the typical IoT device in CE group 0, and $h_{0,j}$ is the channel power gain from the interfering IoT device in CE group 0 to the typical BS.
	
	For ease of presentation, we set $l_0 = 4\times k_0$, and the probability that all of $l_0$  preamble symbol groups of the typical IoT device in CE group 0 are successfully transmitted is presented in the following \textbf{Lemma 2.} 
	%and the preamble transmission success probability of IoT device in CE group 0 is presented in the following \textbf{Lemma 2}.
	{
		\begin{lemma}
			The probability that all of $l_0$ received {\rm{SINRs}} at the BS from a {randomly chosen} IoT device  in CE group 0 exceed a certain threshold $\gamma_{th}$ is expressed as
			\begin{align}\label{four}
			&{\mathbb P}_{0,0}[{\theta _1},{\theta _2}, \cdots ,{\theta _{{k_0}}}] \\ \nonumber
			%\nonumber\\&
			% \overset{\text{(a)}}= {\mathbb{E}}\Big[ {\exp \Big( \frac{{{-\gamma _{th}} }}{\rho}({I_{0}^1} + {\sigma ^2})\Big)\exp \Big(  \frac{{{-\gamma _{th}} }}{\rho}({I_{0}^2} + {\sigma ^2})\Big)...\exp \Big(  \frac{{{-\gamma _{th}} }}{\rho}({I_{0}^{l_0}} + {\sigma ^2})\Big) \big.} \Big]\nonumber \\
			=& \exp \Big( - \frac{{{l_0}{\gamma _{th}}{\sigma ^2} }}{\rho}\Big){\mathbb{E}}\Big[ {\exp \Big( -\frac{{{\gamma _{th}} }}{\rho}\sum\limits_{\beta  = 1}^{l_0} {{I_{0}^\beta}\Big) \big.} } \Big],
			\end{align}			
\setcounter{equation}{17}			
\begin{figure*}[ht]
	\begin{align}\label{class0RACH}
	&{{\cal P}_0^1} =\nonumber\\
	& \sum\limits_{{n_0} = 0}^\infty 
	\Bigg\{
	\underbrace {\frac{{{c^{(c + 1)}}\Gamma ({n_0} + c + 1){{\Big( {\displaystyle{{{\mathcal{A}_0^1\mathcal{R}_0^1}}}{{{\frac{\lambda _{0}^a}{\lambda _B}}}}} \Big)}^{{n_0}}}}}{{\Gamma (c + 1)\Gamma ({n_0} + 1){{\Big( {\displaystyle{{{\mathcal{A}_0^1\mathcal{R}_0^1}}}{{{\frac{\lambda _{0}^a}{\lambda _B}}}} + c} \Big)}^{{n_0} + c + 1}}}}}_{\rm I}
	\underbrace {\sum\limits_{{k_0} = 1}^{{K_0}} {{{( - 1)}^{{k_0} + 1}}\Big( \begin{array}{l}
			{K_0}\\
			{k_0}
			\end{array}\Big)
			\exp \Big(  \frac{{{-l_0}{\gamma _{th}}{\sigma ^2} }}{\rho}
			-\frac{{{2({{\gamma_{th}}}{})^{\frac{2}{\alpha}}\mathcal{A}_0^1\mathcal{R}_0^1\lambda _{0}^a}\gamma \Big( {2,\pi {\lambda _B}{{( {\frac{P}{\rho }} )}^{\frac{2}{\alpha }}}} \Big)}}{{{{ {\lambda _B}}}\Big( {1 - \exp \big( - \pi {\lambda _B}{{( {\frac{P}{\rho }} )}^{\frac{2}{\alpha }}}\big)} \Big)}}\mathcal{F}_0
			\bigg)} }_{{\rm I}{\rm I}}\nonumber\\
	&\underbrace {{{\bigg(1- \sum\limits_{{k_0} = 1}^{{K_0}} {{{( - 1)}^{{k_0} + 1}}\Big( \begin{array}{l}
					{K_0}\\
					{k_0}
					\end{array}\Big)
					\exp \Big(  \frac{{{-l_0}{\gamma _{th}}{\sigma ^2} }}{\rho}
					-\frac{{{2({{\gamma_{th}}}{})^{\frac{2}{\alpha}}\mathcal{A}_0^1\mathcal{R}_0^1\lambda _{0}^a}\gamma \Big( {2,\pi {\lambda _B}{{( {\frac{P}{\rho }} )}^{\frac{2}{\alpha }}}} \Big)}}{{{{ {\lambda _B}}}\Big( {1 - \exp \big( - \pi {\lambda _B}{{( {\frac{P}{\rho }} )}^{\frac{2}{\alpha }}}\big)} \Big)}}\mathcal{F}_0
					\bigg)^{{n_0}}} }}}_{{\rm I}{\rm I}{\rm I}}
	\Bigg\}.
	\end{align}
	\hrulefill	
\end{figure*}	
where the Laplace transform of the aggregate interference  received at the typical BS  is given as 
\setcounter{equation}{15}		
				\begin{align}\label{laplace}
				&{\mathbb E}\Big[\exp \Big( - \frac{{{\gamma _{th}}}}{\rho }\sum\limits_{\beta  = 1}^{{l_0}} {I_{0}^{\beta}\Big) } \Big]\\ \nonumber			
				=& \exp \bigg( - 
				\frac{{{2({{\gamma_{th}}}{})^{\frac{2}{\alpha}}\mathcal{A}_0^1\mathcal{R}_0^1\lambda _{0}^a}\gamma \Big( {2,\pi {\lambda _B}{{( {\frac{P}{\rho }} )}^{\frac{2}{\alpha }}}} \Big)}}{{{{ {\lambda _B}}}\Big( {1 - \exp \big( - \pi {\lambda _B}{{( {\frac{P}{\rho }} )}^{\frac{2}{\alpha }}}\big)} \Big)}}\mathcal{F}_0\bigg),
				\end{align}					  	
			where 
			\begin{align}\label{F0}
			\mathcal{F}_0=\displaystyle\int_{({\gamma_{th}})^{\frac{-1}{\alpha}}}^{{\infty}}{\Big[ {1 - {{\big( {\frac{1}{{1 + y^{ - \alpha }}}} \big)}^{{l_0}}}} \Big]} ydy.  
			\end{align}
	\end{lemma}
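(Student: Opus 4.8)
The plan is to condition on the interference configuration, use the Rayleigh (exponential) fading of the desired link to turn the joint SINR event into a product of Laplace-type terms, and then evaluate the resulting functional of the interfering point process via the probability generating functional (PGFL).

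\textbf{Conditioning and the Rayleigh trick.} Fix the typical CE-group-$0$ device and condition on the point process $\Phi_0$ of same-cell devices that select the same preamble; by \textbf{Lemma 1}, together with the active probability $\mathcal{A}_0^1\mathcal{R}_0^1$ and uniform preamble selection, this is (approximately) a homogeneous PPP of intensity $\mathcal{A}_0^1\mathcal{R}_0^1\lambda_{0}^a$. The frequency-hopping rule makes the set of colliding devices common to all $l_0=4k_0$ symbol groups under consideration, while the per-symbol-group fadings stay i.i.d.; under path-loss inversion power control the received signal power is $\rho h_0^\beta$, so the typical-device distance $r_{i,0}$ drops out and $\{\mathrm{SINR}^\beta\ge\gamma_{th}\}$ becomes $\{h_0^\beta\ge \gamma_{th}(I_0^\beta+\sigma^2)/\rho\}$. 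Since $h_0^1,\dots,h_0^{l_0}$ are independent $\mathrm{Exp}(1)$ variables, $\mathbb{P}[\theta_1,\dots,\theta_{k_0}]=\mathbb{E}\big[\prod_{\beta=1}^{l_0}e^{-\gamma_{th}(I_0^\beta+\sigma^2)/\rho}\big]=e^{-l_0\gamma_{th}\sigma^2/\rho}\,\mathbb{E}\big[e^{-(\gamma_{th}/\rho)\sum_{\beta=1}^{l_0}I_0^\beta}\big]$, which is \eqref{four}.

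\textbf{Evaluating the Laplace term.} Conditioned on $\Phi_0$, the interferers' fadings are independent across symbol groups and devices, so the expectation factorizes over $\beta$ and over $j\in\Phi_0$; averaging out each $\mathrm{Exp}(1)$ interferer fading yields $\prod_{j\in\Phi_0}\big(1+\tfrac{\gamma_{th}}{\rho}P_{0,j}d_{0,j}^{-\alpha}\big)^{-l_0}$, where $d_{0,j}$ is the distance from interferer $j$ to the typical BS and, by \eqref{P_i}, $P_{0,j}=\rho r_{0,j}^{\alpha}$ with $r_{0,j}$ the distance from interferer $j$ to its own serving BS, so each factor simplifies to $\big(1+\gamma_{th}(r_{0,j}/d_{0,j})^{\alpha}\big)^{-l_0}$. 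Applying the PGFL of $\Phi_0$ together with the standard decoupling that treats $r_{0,j}$ as independent of the relative location, with the nearest-BS association enforcing $d_{0,j}\ge r_{0,j}$ and $r_{0,j}$ distributed as a Rayleigh variate truncated to $[0,D_0]$ (density $2\pi\lambda_B r e^{-\pi\lambda_B r^2}/(1-e^{-\pi\lambda_B D_0^2})$), the exponent becomes $\mathcal{A}_0^1\mathcal{R}_0^1\lambda_{0}^a\,\mathbb{E}_r\!\int_{r}^{\infty}\big[1-(1+\gamma_{th}(r/d)^{\alpha})^{-l_0}\big]2\pi d\,dd$. The inner integral, via the substitution $y=(d/r)\gamma_{th}^{-1/\alpha}$, equals $2\pi r^2\gamma_{th}^{2/\alpha}\mathcal{F}_0$ with $\mathcal{F}_0$ as in \eqref{F0} --- the exclusion $d\ge r$ being exactly what produces the lower limit $\gamma_{th}^{-1/\alpha}$ --- and $\mathbb{E}[r^2]=\gamma(2,\pi\lambda_B D_0^2)/\big(\pi\lambda_B(1-e^{-\pi\lambda_B D_0^2})\big)$ follows from $t=\pi\lambda_B r^2$; combining and writing the group-$0$ region radius through $(P/\rho)^{2/\alpha}=D_0^2$ gives \eqref{laplace}. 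Substituting \eqref{four} and \eqref{laplace} into the Binomial expansion \eqref{SI} and then into \eqref{RACH} with the PMF \eqref{N=n} yields the closed form \eqref{class0RACH}.

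\textbf{Main obstacle.} The most delicate step is the treatment of the interfering power: $P_{0,j}$ is governed by interferer $j$'s distance to its own BS, which is statistically coupled to its distance to the typical BS through the nearest-BS association. Making this tractable requires the decoupling approximation together with the correct \emph{truncated} distance law on $[0,D_0]$, and it is precisely the resulting exclusion region $d\ge r$ that is responsible for the non-trivial lower limit in $\mathcal{F}_0$ and for the $\gamma(2,\cdot)$ and $1-e^{-\pi\lambda_B D_0^2}$ factors in \eqref{laplace}. A secondary point is to pin down the operational meaning of "correlation among the four symbol groups" (common interferer locations, independent per-symbol-group fading), which is what legitimizes both the i.i.d. treatment of $h_0^\beta$ in the Rayleigh trick and the factorization of the Laplace term.
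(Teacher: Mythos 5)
Your proposal follows the same route as the paper's Appendix B: average out the desired link's exponential fading to obtain \eqref{four}, average the i.i.d.\ interferer fadings across the $l_0$ symbol groups to get the product $\prod_j\big(1+\gamma_{th}P_{0,j}d_{0,j}^{-\alpha}/\rho\big)^{-l_0}$, apply the PGFL with the nearest-BS exclusion $d\ge r_{0,j}$, and change variables so that the spatial integral collapses to $\gamma_{th}^{2/\alpha}\mathbb{E}[P^{2/\alpha}]\mathcal{F}_0$ with the lower limit $\gamma_{th}^{-1/\alpha}$ coming precisely from that exclusion --- all of which matches steps (a)--(c) of the paper. The one point where you diverge is the moment $\mathbb{E}[P^{2/\alpha}]$: you justify the truncated-Rayleigh law for the interferer's serving distance by truncating at the CE-group-0 boundary $D_0$ and then assert $(P/\rho)^{2/\alpha}=D_0^2$ to recover \eqref{laplace}, whereas the paper takes the truncation radius to be $(P/\rho)^{1/\alpha}$ directly, citing the truncated channel-inversion model of its reference (devices whose inversion power would exceed the maximum $P$). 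Since the paper defines $D_0$ through the downlink SNR threshold $\delta_1$ and defines $\rho$, $P$ independently, the identity $D_0=(P/\rho)^{1/\alpha}$ is not established anywhere in the paper; your derivation therefore yields the stated formula only under that extra (unstated) assumption, and you should either prove that identity from the system parameters or adopt the paper's max-power truncation instead. Everything else --- the factorization enabled by common interferer locations but independent per-symbol-group fading, and the substitution into \eqref{SI}, \eqref{N=n}, \eqref{RACH} to obtain \eqref{class0RACH} --- is sound and consistent with the paper.
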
}
	\begin{proof}
		See Appendix B.
	\end{proof}
	% \frac{{{\rho ^\frac{2}{\alpha} }\gamma \Big( {2,\pi {\lambda _B}{{( {\frac{P}{\rho }} )}^{\frac{2}{\alpha }}}} \Big)}}{{{{\pi {\lambda _B}}}\Big( {1 - \exp \big( - \pi {\lambda _B}{{( {\frac{P}{\rho }} )}^{\frac{2}{\alpha }}}\big)} \Big)}}
	Substituting   \eqref{four} into \eqref{SI}, we obtain the preamble transmission success probability 
	and then substituting  \eqref{N=n} and \eqref{SI} into \eqref{RACH}, we derive the RACH  success probability of a randomly chosen IoT device in CE group 0 in the 1st time slot in the following \textbf{Theorem 1}.
	
	\begin{theorem}
		The RACH  success probability of a randomly chosen IoT device in the CE group 0 in the 1st time slot is derived in \eqref{class0RACH} at the top of this page with $\mathcal{F}_0$  given in \eqref{F0}.
	\end{theorem}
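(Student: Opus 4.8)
The plan is to obtain \eqref{class0RACH} by specializing the general definition \eqref{RACH} to $i=0$ and substituting in the three ingredients already established: the intra-group interferer-count PMF \eqref{N=n}, the Binomial expansion \eqref{SI} of the preamble transmission success probability, and \textbf{Lemma 2}. I would first read off Part~\Rmnum{1} of \eqref{class0RACH} verbatim from \eqref{N=n} with $i=0$, using $\lambda_0^a = \lambda_0/S_0$ from \eqref{density} and $\mathcal{A}_0^1\mathcal{R}_0^1$ the active probability of a group-$0$ device in the first slot.

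Next, for Part~\Rmnum{2}, I would take \eqref{SI} with $i=0$ and replace each term ${\mathbb P}_{0,0}[\theta_1,\dots,\theta_{k_0}\mid r_{0,0}]$ by the expression \eqref{four} of \textbf{Lemma 2}; since group-$0$ devices use path-loss inversion power control, the desired received power equals the constant $\rho$ regardless of $r_{0,0}$, so the conditioning on $r_{0,0}$ is vacuous and \eqref{four} applies as stated --- this is what makes group $0$ the most tractable of the three cases. I would then replace the Laplace transform appearing inside \eqref{four} by its closed form \eqref{laplace}; setting $l_0 = 4k_0$ and merging the two exponential factors reproduces exactly the underbraced Part~\Rmnum{2} of \eqref{class0RACH}, with $\mathcal{F}_0$ the integral \eqref{F0} carried over unchanged.

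For Part~\Rmnum{3}, \eqref{RACH} already writes it as $\prod_{m=1}^{n_0}\big(1-{\mathbb P}_{S,0,m}[K_0]\big)$ conditioned on $N_0=n_0$. Since every intra-group interferer lies in the same coverage area $CA_0$, uses the same inversion power control, and is configured with the same repetition value $K_0$, it is statistically identical to the typical device with respect to the typical BS; hence ${\mathbb P}_{S,0,m}[K_0]={\mathbb P}_{S,0,0}[K_0]$ for every $m$, and the conditional non-collision probability collapses to $\big(1-{\mathbb P}_{S,0,0}[K_0]\big)^{n_0}$. Substituting the Part~\Rmnum{2} expression for ${\mathbb P}_{S,0,0}[K_0]$ gives the underbraced Part~\Rmnum{3}.

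Finally I would multiply the three factors and take $\mathbb{E}_N[\cdot]$ by summing the product over $n_0=0,1,2,\dots$ (the $n_0=0$ term contributing the empty product $1$ in Part~\Rmnum{3}), which reassembles \eqref{class0RACH}. Since the substantive analysis --- the PPP thinning of \textbf{Lemma 1} and the Laplace-transform evaluation of \textbf{Lemma 2} --- has already been carried out in the lemmas and their appendices, the only real care needed in this theorem is the bookkeeping of exponents when the $K_0$-term Binomial sum, the symbol-group count $l_0=4k_0$, and the Poisson-count average are combined; that combinatorial assembly is the step I would watch most closely.
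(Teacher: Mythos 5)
Your proposal is correct and follows essentially the same route as the paper, which obtains \eqref{class0RACH} precisely by substituting the Lemma~2 expression \eqref{four} (with the Laplace transform \eqref{laplace}) into the Binomial expansion \eqref{SI}, and then inserting that together with the PMF \eqref{N=n} into the definition \eqref{RACH}. Your added justifications --- that the conditioning on $r_{0,0}$ is vacuous under path-loss inversion, that the interferers are exchangeable so the conditional product in Part~\Rmnum{3} collapses to an $n_0$-th power, and that the $n_0=0$ term contributes an empty product --- are exactly the steps the paper leaves implicit.
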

	
	\subsection{{CE group 1 and CE group 2 in Case 1}, ${K_i}>2$ (i = 1, 2)}
	%Same as Part A in Section IV, this section presents a general single time slot analytical model to characterize  the RACH  success  probability  of  arandomly chosen IoT device in CE group 1 and 2 for different RA  schemes. 
	The SINR received at the typical BS is written as
	\setcounter{equation}{18}	
	\begin{align}\label{SINR2}
	{\rm{SINR}} =\displaystyle \frac{{Ph_{i,0}(r_{i,0})^{ - \alpha }}}{{\mathcal {I}_{i} + {\sigma ^2}}},
	\end{align}
	where $\mathcal {I}_{i}$ is aggregate interference of the randomly chosen IoT device in CE group $i$ given as 
	\begin{align}\label{INTRA2}
	\mathcal {I}_{i} = \sum\limits_{j \in {\mathcal {Z}_i}}^{} {P{h_{i,j}}({r_{i,j}})^{ - \alpha }}. 
	\end{align}
	In  \eqref{INTRA2},  $\mathcal Z_i$ is the set of interfering IoT devices for the typical IoT device in CE group $i$, $h_{i,j}$ and $r_{i,j}$ are channel power gain and distance from the  interfering IoT device to the typical BS. 
	
	According to the nature of the Poisson Process, given that there are $N_i+1$ IoT devices in the area of $CA_i$, ${r_{i,0}}$ follows independent and identical uniform distribution\cite{56383}. 
	Let $R$ denote the random variable with the same uniform distribution, the PDF of $R$ is derived as 
	\begin{align}\label{distance}
	{f_{{R}}}({r}) = 2{r}/(D_{i}^2 - D_{{i - 1}}^2), ({D_{i-1}} \le {r} \le {D_i}),
	\end{align}
	where $D_0$ , $D_1$ are given in  \eqref{T_i_2} and $D_2$ is given in  \eqref{D2}.

	{Same as CE group 0, we set $l_i = 4\times {k_i}$, and then we  derive the probability that all of $l_i$ preamble symbol groups of the typical IoT device in CE group $i$ $(i = 1, 2)$  are successfully transmitted in the following \textbf{Lemma 3}.}
	{
		\begin{lemma}\label{preanbleii}
			% The preamble transmission success probability of a randomly chosen IoT device in the CE group $i$  $(i = 1, 2)$ in the 1st time slot is expressed as
			The probability that all of $l_i$ received {\rm{SINRs}} at the BS from a {randomly chosen} IoT device  in CE group $i$  $(i = 1, 2)$ exceed a certain threshold $\gamma_{th}$ is expressed as
			\begin{align}\label{preamblei}
			&p({\gamma _{th}})\\ \nonumber
			&= \int\limits_{{D_{i - 1}}}^{{D_i}} {\exp\Big( -  {{\frac{{{l_i\gamma _{th}\sigma ^2}r^\alpha }}{P}}}\Big)}
			\exp \big( - 2\pi {\mathcal{A}_i^1\mathcal{R}_i^1\lambda _{i}^a}\mathcal{F}_i \big)f_R(r)dr
			%\frac{{2{r}}}{{D_{_i}^2 - D_{_{i - 1}}^2}}d{r},
			\end{align} 
			where 
			\begin{align}\label{FI}
			\mathcal{F}_i=\displaystyle\int\limits_{{D_{i}}}^{{\infty}} {\Big( {1 - {{\Big( {\frac{1}{{1 + \gamma _{th}r^\alpha{{y}^{ - \alpha }}}}} \Big)}^{{l_i}}}} \Big)} ydy.  
			\end{align}
			and $f_R(r)$ is given in \eqref{distance}.
	\end{lemma}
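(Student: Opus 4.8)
The plan is to compute $p(\gamma_{th})$, i.e.\ the value of $\mathbb{P}_{i,0}[\theta_1,\theta_2,\dots,\theta_{k_i}\mid r_{i,0}]$ averaged over the serving distance, by conditioning first on the typical device's distance to its BS and on the interfering configuration, evaluating the conditional probability, and then deconditioning in the reverse order. First I would fix $r_{i,0}=r$; by the Poisson property, conditioned on the number of CE group $i$ devices in the annulus $CA_i$, their distances to the typical BS are i.i.d.\ with density $f_R(r)$ in \eqref{distance}, so it suffices to treat $r$ as drawn from $f_R$ and integrate at the end. For fixed $r$, each of the $l_i=4k_i$ symbol groups making up the $k_i$ repetitions has $\mathrm{SINR}^\beta=\dfrac{Ph_0^\beta r^{-\alpha}}{\mathcal{I}_i^\beta+\sigma^2}$ with the desired fades $h_0^\beta\sim\mathrm{Exp}(1)$ i.i.d.\ across $\beta$. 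Because the pseudo-random plus fixed-size frequency hopping within a repetition puts every symbol group in one-to-one correspondence with the first one, and the same preamble is reused across the $k_i$ repetitions, the set $\mathcal{Z}_i$ of same-preamble interferers is common to all $l_i$ symbol groups while only their fades are fresh. Using exponentiality, conditioned on $r$, on $\mathcal{Z}_i$, and on the interferer fades, $\mathbb{P}[\theta_1,\dots,\theta_{k_i}]=\prod_{\beta=1}^{l_i}\exp\!\big(-\tfrac{\gamma_{th}r^\alpha}{P}(\mathcal{I}_i^\beta+\sigma^2)\big)=\exp\!\big(-\tfrac{l_i\gamma_{th}\sigma^2 r^\alpha}{P}\big)\prod_{\beta=1}^{l_i}\exp\!\big(-\tfrac{\gamma_{th}r^\alpha}{P}\mathcal{I}_i^\beta\big)$, which already isolates the noise factor in \eqref{preamblei}.

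Next I would average out the interference. Writing $\mathcal{I}_i^\beta=\sum_{y\in\Phi_i}Ph_{i,y}^\beta|y|^{-\alpha}$ as in \eqref{INTRA2} with the interferer fades $h_{i,y}^\beta$ i.i.d.\ $\mathrm{Exp}(1)$ across both $y$ and $\beta$, the product over $\beta$ and the sum over $y$ swap, each factor is the Laplace transform of an exponential, and the expression conditioned on $\Phi_i$ alone reduces to $\prod_{y\in\Phi_i}\big(1+\gamma_{th}r^\alpha|y|^{-\alpha}\big)^{-l_i}$; the exponent $l_i$ sits per interferer precisely because $\mathcal{Z}_i$ is shared across the $l_i$ symbol groups. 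Invoking the approximation of \textbf{Lemma 1}, I replace $\mathcal{Z}_i$ by a homogeneous PPP $\Phi_i$ of intensity $\mathcal{A}_i^1\mathcal{R}_i^1\lambda_i^a$ (the active-device thinning of $\lambda_i^a$) supported on $\mathbb{R}^2\setminus B(o,D_i)$, and apply the probability generating functional of the PPP to obtain $\exp\!\Big(-\mathcal{A}_i^1\mathcal{R}_i^1\lambda_i^a\!\int_{\mathbb{R}^2\setminus B(o,D_i)}\!\big[1-(1+\gamma_{th}r^\alpha|y|^{-\alpha})^{-l_i}\big]\,dy\Big)$. Passing to polar coordinates collapses the planar integral into $2\pi\mathcal{A}_i^1\mathcal{R}_i^1\lambda_i^a\mathcal{F}_i$ with $\mathcal{F}_i$ as in \eqref{FI}. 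Finally, deconditioning the serving distance by integrating the product of the noise factor and this interference factor against $f_R(r)$ over $[D_{i-1},D_i]$ yields \eqref{preamblei}.

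The genuine obstacle is not the algebra — the generating-functional step and its reduction to $\mathcal{F}_i$ are routine — but pinning down and justifying the modelling choices on which the clean form rests: (i) that the entire block of $l_i=4k_i$ symbol groups is driven by one common interfering point set, so that the $l_i$-th power enters inside the product over interferers rather than as a product of $l_i$ independently generated Laplace transforms; and (ii) that, after the thinning of \textbf{Lemma 1}, the same-preamble interferers may be represented as a PPP living outside the disk $B(o,D_i)$. Both are approximations and are where the validity of the statement really sits. A secondary point, in contrast with CE group $0$: since CE groups $1$ and $2$ transmit with the fixed power $P$ rather than path-loss inversion, the serving distance does not cancel — $r^\alpha$ survives in both the noise exponent and $\mathcal{F}_i$ — so the $f_R(r)$ integral cannot be evaluated in closed form and must be left in the final expression, which is exactly the form given in \eqref{preamblei}.
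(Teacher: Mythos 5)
Your proposal follows the same route as the paper's Appendix C: condition on the serving distance $r$ with density $f_R(r)$, use the exponential desired fade to factor out the noise term and express the interference contribution as a Laplace transform, average over the i.i.d.\ interferer fades to get the per-interferer factor $\big(1+\gamma_{th}r^\alpha y^{-\alpha}\big)^{-l_i}$, apply the PGFL of the PPP of active same-preamble interferers outside radius $D_i$ to obtain $\exp\big(-2\pi\mathcal{A}_i^1\mathcal{R}_i^1\lambda_i^a\mathcal{F}_i\big)$, and finally decondition over $r$. The argument is correct and your added remarks on where the approximations enter (shared interferer set across the $l_i$ symbol groups, PPP thinning from Lemma 1) only make explicit what the paper assumes implicitly.
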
}
	
	\begin{proof}
		See Appendix C.
	\end{proof}
	
	{Note that the above \textbf{Lemma 3.} is suitable for CE group 2 in Case 1.
		For CE group 2 in Case 2, we have
		the following \textbf{Lemma 4.}
		\begin{lemma}\label{case2}
			The probability that all of $l_2$ received {\rm{SINRs}} at the BS from a {randomly chosen} IoT device  in CE group $2$   exceed a certain threshold $\gamma_{th}$ is expressed as
			\begin{align}\label{preamble2}
			&p({\gamma _{th}})\\ \nonumber
			&= \int\limits_{{D_1}}^{{\infty}} {\exp\Big( -  {{\frac{{{l_2\gamma _{th}\sigma ^2}r^\alpha }}{P}}}\Big)}
			\exp \big( - 2\pi {\mathcal{A}_2^1\mathcal{R}_2^1\lambda _{2}^a}\mathcal{F}_2 \big)f_R(r)dr
			%\frac{{2{r}}}{{D_{_i}^2 - D_{_{i - 1}}^2}}d{r},
			\end{align} 
			where 
			\begin{align}\label{FI2}
			\mathcal{F}_2=\displaystyle\int\limits_{{D_{1}}}^{{\infty}} {\Big( {1 - {{\Big( {\frac{1}{{1 + \gamma _{th}r^\alpha{{y}^{ - \alpha }}}}} \Big)}^{{l_2}}}} \Big)} ydy.  
			\end{align}
			and 
			\begin{align}\label{r_0case2}
			{f_{{R}}}({r}) = 2\pi{\lambda_B}r\exp(-{\lambda_B}{\pi}(r^2-D_1^2)).
			\end{align}
	\end{lemma}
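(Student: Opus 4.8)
The plan is to reuse the argument of the proof of \textbf{Lemma 3} (Appendix~C) almost verbatim, changing only the two ingredients that distinguish Case~2 from Case~1: the support of the interfering point process and the law of the serving distance $r_{2,0}$. First I would condition on $r_{2,0}=r$ and, using \eqref{SINR2}, write $\theta_{k_2}$ as the intersection of the $l_2=4k_2$ per-symbol-group events $\{{\rm SINR}^{\beta}\ge\gamma_{th}\}$, $\beta=1,\cdots,l_2$. Since the desired channel gains are i.i.d.\ and unit-mean exponential and independent of the interference and of $r$, each such event contributes, conditionally on the interference field, the factor $\exp(-\gamma_{th}(\mathcal{I}_2^{\beta}+\sigma^2)r^{\alpha}/P)$; taking the product over $\beta$ and then the expectation over the interference yields $\exp(-l_2\gamma_{th}\sigma^2 r^{\alpha}/P)\,\mathbb{E}[\exp(-\gamma_{th}r^{\alpha}P^{-1}\sum_{\beta=1}^{l_2}\mathcal{I}_2^{\beta})]$, which is the Case~2 analogue of \eqref{four}.

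Second, I would evaluate the interference Laplace transform $\mathbb{E}[\exp(-\gamma_{th}r^{\alpha}P^{-1}\sum_{\beta=1}^{l_2}\mathcal{I}_2^{\beta})]$. Invoking \textbf{Lemma 1} in Case~2, the active co-preamble interferers of CE group~2 are approximated by a homogeneous PPP of density $\mathcal{A}_2^1\mathcal{R}_2^1\lambda_2^a$; the only difference from Case~1 is that this field is now supported on $\{y>D_1\}$ with no outer truncation, because in Case~2 the coverage region of group~2 is the whole portion of the Voronoi cell beyond $D_1$. Carrying out the per-interferer fading average term by term over the $l_2$ symbol groups and then applying the probability generating functional of the PPP gives $\exp(-2\pi\mathcal{A}_2^1\mathcal{R}_2^1\lambda_2^a\,\mathcal{F}_2)$ with $\mathcal{F}_2$ as in \eqref{FI2}; this is structurally identical to $\mathcal{F}_i$ in \eqref{FI} but with the lower limit $D_2$ replaced by $D_1$.

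Third, I would de-condition on the serving distance. In Case~2 the typical device belongs to CE group~2 exactly when its distance to the nearest point of $\Phi_{\rm{B}}$ exceeds $D_1$, so $f_R$ is the contact (nearest-BS) distance distribution of a homogeneous PPP of intensity $\lambda_{\rm B}$ conditioned on exceeding $D_1$: from the void probability $\exp(-\lambda_{\rm B}\pi r^2)$ one obtains $f_R(r)=2\pi\lambda_{\rm B}r\exp(-\lambda_{\rm B}\pi r^2)/\exp(-\lambda_{\rm B}\pi D_1^2)=2\pi\lambda_{\rm B}r\exp(-\lambda_{\rm B}\pi(r^2-D_1^2))$ for $r\ge D_1$, i.e.\ \eqref{r_0case2}. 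Integrating the conditional success probability obtained in the first two steps against this density over $[D_1,\infty)$ gives exactly \eqref{preamble2}.

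I expect the main obstacle to be the second step, namely making the interference model internally consistent with \textbf{Lemma 1}, Case~2: one has to justify that the ``conditional thinning'' approximation legitimately lets us treat the group-2 co-preamble interferers seen from the typical BS as a single homogeneous PPP supported on $\{y>D_1\}$ (the same region where the desired device resides), and that the treatment of the $l_2$ time-correlated symbol groups (common interferer set, independent fading) is consistent with the exponent $l_2$ appearing inside $\mathcal{F}_2$. Once that modeling step is accepted, the computation is a direct transcription of Appendix~C with $D_2\mapsto D_1$ in the interference integral and the annular density \eqref{distance} replaced by the truncated Rayleigh density \eqref{r_0case2}; the serving-distance step and the final integration are routine.
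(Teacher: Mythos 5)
Your proposal is correct and follows essentially the same route as the paper: the paper gives no separate appendix for this lemma precisely because it is obtained by rerunning the Appendix~C derivation of Lemma~3 (conditioning on $r$, averaging the $l_2$ exponential fading gains, applying the PGFL of the co-preamble interferer PPP, then de-conditioning) with exactly the two substitutions you identify — the interference integral's lower limit becoming $D_1$ in $\mathcal{F}_2$ and the annular density \eqref{distance} replaced by the truncated contact-distance density \eqref{r_0case2}. Your derivation of $f_R$ from the void probability of $\Phi_{\rm B}$ conditioned on $R>D_1$ matches \eqref{r_0case2}, so nothing further is needed.
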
}

		\begin{figure*}
		\begin{align}\label{class2RACH}
		&{{\cal P}_i^1} =\nonumber\\
		&		 \sum\limits_{{n_i} = 0}^\infty 
		\Bigg\{ 
		\underbrace {\displaystyle\frac{{{c^{(c + 1)}}\Gamma ({n_i} + c + 1){{\Big( {\displaystyle{{{\mathcal{A}_i^1\mathcal{R}_i^1}}}{{{\frac{\lambda _{i}^a}{\lambda _B}}}}} \Big)}^{{n_i}}}}}{{\Gamma (c + 1)\Gamma ({n_i} + 1){{\Big( {\displaystyle{{{\mathcal{A}_i^1\mathcal{R}_i^1}}}{{{\frac{\lambda _{i}^a}{\lambda _B}}}} + c} \Big)}^{{n_i} + c + 1}}}}}_{\rm I}\underbrace {\sum\limits_{{{k_i}} = 1}^{{{K_i}}} {{{( - 1)}^{{{k_i}} + 1}}\Big( \begin{array}{l}
				{{K_i}}\\
				{{k_i}}
				\end{array} \Big)
				\int\limits_{{D_{i - 1}}}^{{D_i}} {\exp\Big(   {{\frac{{{-l_i\gamma _{th}\sigma ^2}r^\alpha }}{P}}}}
				 - 2\pi {\mathcal{A}_i^1\mathcal{R}_i^1\lambda _{i}^a}\mathcal{F}_i \Big)
				{f_{{R}}}({r})d{r} 
				%\frac{{2{r}}}{{D_{_i}^2 - D_{_{i - 1}}^2}}
		} }_{{\rm I}{\rm I}}\nonumber\\
		&\underbrace {{{\bigg( {1 - \sum\limits_{{{k_i}} = 1}^{{{K_i}}} {{{( - 1)}^{{{k_i}} + 1}}\Big( \begin{array}{l}
							{{K_i}}\\
							{{k_i}}
							\end{array} \Big)
							\int\limits_{{D_{i - 1}}}^{{D_i}} {\exp\Big(  {{\frac{{{-l_i\gamma _{th}\sigma ^2}r^\alpha }}{P}}}}
							 - 2\pi {\mathcal{A}_i^1\mathcal{R}_i^1\lambda _{i}^a}\mathcal{F}_i \Big)
							{f_{{R}}}({r})d{r} 
							%\frac{{2{r}}}{{D_{_i}^2 - D_{_{i - 1}}^2}}d{r}
						} 
					} \bigg)}^{{n_i}}}}_{{\rm I}{\rm I}{\rm I}}
		\Bigg\},
		\end{align}
		\hrulefill
	\end{figure*}
	Substituting   \eqref{preamblei} or \eqref{preamble2}  into \eqref{SI}, we obtain the preamble transmission success probability 
	and then substituting  \eqref{N=n} and \eqref{SI} into \eqref{RACH}, we derive the RACH  success probability of a randomly chosen IoT device in CE group $i$ $(i = 1, 2)$  in the 1st time slot in the following \textbf{Theorem 2}.
	% Substituting  \eqref{N=n} and  \eqref{preamblei} into  \eqref{RACH}, we derive the RACH  success probability of an IoT device in CE group $i$ $(i = 1, 2)$  in the following \textbf{Theorem 2}.
	{
		\begin{theorem}
			The RACH  success probability of a randomly chosen IoT device in the CE group $i$ in the 1st time slot is derived in \eqref{class2RACH} at the top of this page, where $\mathcal{F}_i$  is given in \eqref{FI} and $f_R(r)$ is given in \eqref{distance} for CE group 1 and CE group 2 in Case 1; $\mathcal{F}_i$  is given in \eqref{FI2} and $f_R(r)$ is given in \eqref{r_0case2} for  CE group 2 in Case 2.
	\end{theorem}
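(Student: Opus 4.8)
The plan is to derive \eqref{class2RACH} by assembling the single-device results of \textbf{Lemma 3} (for CE group 1 and for CE group 2 in Case 1) and \textbf{Lemma 4} (for CE group 2 in Case 2) inside the general decomposition \eqref{RACH}, in exactly the same manner that \textbf{Theorem 1} was obtained from \textbf{Lemma 2}. Recall that \eqref{RACH} factorizes the RACH success probability of the typical device in CE group $i$, conditioned on $N_i=n_i$ co-preamble intra-group interferers, into three pieces: the PMF $\mathbb{P}[N_i=n_i]$ (part~I), the preamble transmission success probability $\mathbb{P}_{S,i,0}[K_i]$ of the typical device (part~II), and the non-collision probability $\prod_{m=1}^{n_i}\bigl(1-\mathbb{P}_{S,i,m}[K_i]\bigr)$ (part~III).

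First I would compute the serving-distance-averaged preamble transmission success probability. Setting $l_i=4k_i$, \textbf{Lemma 3} gives the unconditional probability $p(\gamma_{th})$ in \eqref{preamblei} that all $l_i$ SINRs of a randomly chosen CE group $i$ device exceed $\gamma_{th}$, already integrated against the uniform serving-distance pdf $f_R$ in \eqref{distance} over $[D_{i-1},D_i]$ with $\mathcal{F}_i$ from \eqref{FI}; for CE group 2 in Case~2, \textbf{Lemma 4} replaces these by \eqref{preamble2}, \eqref{r_0case2} and \eqref{FI2}. Substituting $p(\gamma_{th})$ evaluated at $l_i=4k_i$ into the binomial (inclusion--exclusion) expansion \eqref{SI} yields $\mathbb{P}_{S,i,0}[K_i]=\sum_{k_i=1}^{K_i}(-1)^{k_i+1}\binom{K_i}{k_i}\,p_{k_i}(\gamma_{th})$, which is precisely the inner alternating sum appearing in parts II and III of \eqref{class2RACH}.

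Second I would treat the collision term. By the PPP approximation of \textbf{Lemma 1}, every co-preamble intra-group interferer is statistically identical to the typical device (uniformly located in $CA_i$, same $K_i$, same fixed transmit power $P$), so each $\mathbb{P}_{S,i,m}[K_i]$ has the same marginal value as the $\mathbb{P}_{S,i,0}[K_i]$ just computed; and by the decorrelation argument of Section~III --- random subcarrier hopping independently re-randomizes the interfering set each TTI --- the per-device successes are taken to be approximately independent, so $\prod_{m=1}^{n_i}(1-\mathbb{P}_{S,i,m}[K_i])=(1-\mathbb{P}_{S,i,0}[K_i])^{n_i}$, i.e. part III of \eqref{class2RACH}. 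Finally I would substitute the PMF \eqref{N=n} (with $\lambda_i^a$ from \eqref{density} and \textbf{Lemma 1}) for part I and take the expectation over $N_i$; since the $n_i=0$ term has an empty product and equals $\mathbb{P}[N_i=0]\,\mathbb{P}_{S,i,0}[K_i]$, the summation range of \eqref{RACH} extends naturally to $n_i\ge 0$, which produces \eqref{class2RACH}. The only difference between the Case~1 and Case~2 statements is the pair $(\mathcal{F}_i,f_R)$: \eqref{FI} and \eqref{distance} in the former, \eqref{FI2} and \eqref{r_0case2} in the latter.

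I expect the main obstacle to lie not in this assembly but in the two modelling approximations it rests on: (i) the mutual independence of the co-preamble interferers' successes and their independence from the typical device's success, required to factor the product in part III; and (ii) the replacement of each interferer's success probability by the typical device's marginal, even though that interferer's own SINR treats the typical device as part of its interference field. Both are inherited from the spatio-temporal framework of Section~III and the references therein, and they are the steps a careful reader should scrutinize; conditioned on them, the remainder is the mechanical substitution of \textbf{Lemmas 1 and 3/4} into \eqref{RACH}, together with the bookkeeping of the serving-distance average in \eqref{SI}.
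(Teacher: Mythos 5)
Your proposal is correct and follows essentially the same route as the paper: the paper's entire proof of this theorem is the single instruction to substitute \eqref{preamblei} (or \eqref{preamble2} for Case 2) into the binomial expansion \eqref{SI}, and then \eqref{N=n} and \eqref{SI} into the decomposition \eqref{RACH}, which is precisely your assembly. Your explicit flagging of the i.i.d./marginal-replacement approximations behind turning part III into $(1-\mathbb{P}_{S,i,0}[K_i])^{n_i}$, and of the index shift to $n_i\ge 0$, only makes visible steps the paper takes silently.
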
}

	%\subsection{{CE group 2 in case 2},{K_i}>2$ (i = 1, 2)}

	%Fig. 3 plots the RACH success probability ${\cal P}_0^1$ versus the SINR threshold $\gamma _{th}$ for a single time slot using (\ref{class0RACH}).
	%As expected, the RACH success probability of a random chosen IoT device...
	
	%Fig. 4 plots the RACH success probability ...

	\section{Multiple Time Slots Model}
	This section focuses on the RACH success probability of the IoT device in CE group $i$ in NB-IoT network over multiple time slots with different RACH schemes.
	Apart from the physical layer modeling in the spatial domain based on stochastic geometry, the queue evolation in the time domain is modeled and analyzed using probability theory.
	Note that inactive IoT devices % (those without packets or unrestricted) 
	do not attempt RACH, such that they do not generate interference. 
	As mentioned before, whether an IoT device is active or not in the $t$th TTI depends on the non-empty probability $\mathcal{A}_i^t$ and the non-restrict probability $\mathcal{R}_i^t$ of each IoT device.
	Mathematically, to derive the RACH success probability ${\cal P}_i^t$ of a randomly chosen IoT device in CE group $i$ in the $t$th time slot, we need to derive the non-empty probability $\mathcal{A}_i^t$ and the non-restrict probability $\mathcal{R}_i^t$ of the IoT device, which are decided by ${\cal P}_i^{t-1}$, $\mathcal{A}_i^{t-1}$, and $\mathcal{R}_i^{t-1}$.
	% As the number and the locations of BSs and IoT devices are assumed to be fixed all time once they are deployed, the locations of active IoT devices are slightly correlated across time.
	% However, the random channel selection randomizes the set of interfering devices
	% over different time slots, which decorrelates the interference across time, and thus we approximate the distributions of active IoT devices are independent in each time slot following\cite{nan2018random}.
	% We also ignore the time correlation between each repetition in each TTI due to that the length of TTI (such as 640ms) is much longer than each repetition (only 6.8ms)\cite{name2015}\cite{schlienz2016narrowband}.

	Following our precious work \cite{nan2018random}, the accumulated packets number $N_{Cum,i}^t$ of an IoT device for CE group $i$ in the $t$th time slot could be approximated as Poisson distribution $\Lambda_{Cum,i}^t$ with intensity $\mu_{Cum,i}^t$.
	% The intensity of accumulated packets $\mu_{Cum,i}^t$ in the $t$th time slot  is given as
	% \begin{align}\label{accumulated packets}
	% \mu _{Cum,i}^t = \mu _{New}^{t - 1} + \mu _{Cum,i}^{t - 1} - {\mathcal{R}_i^{t - 1}}{\mathcal{P}_i^{t - 1}}{\mathcal{A}_i^{t - 1}}.
	% \end{align}
	% The non-empty probability $\mathcal{A}_i^{t}$ of each IoT device for CE group $i$ in the $t$th time slot is given as
	% \begin{align}\label{non-empty probability}
	% {\mathcal{A}_i^t} =\mathbb P\{ N_{New}^t + N_{Cum,i}^t > 0\}= 1 - {e^{ - \mu _{New}^t - \mu _{Cum,i}^t}}. 
	% \end{align}
	Then the non-empty probabilities $\mathcal{A}_i^{t}$ $(t>1)$ of each IoT device for CE group $i$ in the $t$th time slot are derived based on the iteration process below.
	{
		\begin{align}\label{non-empty probability}
		\begin{cases}
		{\mathcal{A}_i^t} =\mathbb P\{ N_{New}^t + N_{Cum,i}^t > 0\}= 1 - {e^{ - \mu _{New}^t - \mu _{Cum,i}^t}},\\
		\mu _{Cum,i}^t = \mu _{New}^{t - 1} + \mu _{Cum,i}^{t - 1} -g_i{\mathcal{P}_i^{t - 1}}{\mathcal{R}_i^{t - 1}}{\mathcal{A}_i^{t - 1}} .
		\end{cases}
		\end{align}}
	
	In order to derive the RACH success probability $\mathcal{P}_i^{t}$ of a randomly chosen IoT device in CE group $i$ in the $t$th time slot, we also need to have the non-restrict probability $\mathcal{R}_i^{t}$. Note that for different RACH schemes, $\mathcal{R}_i^{t}$ are determined by their transmission policies. 
	\subsubsection{Baseline Scheme} 
	The baseline scheme allows each IoT device to attempt RACH immediately when there are packets in the buffer, so the non-restrict probability in any time slot is given as
	\begin{align}\label{BL}
	\mathcal{R}_{BL}^t = 1.
	\end{align}
	Substituting  \eqref{BL} into  \eqref{non-empty probability}, we have
	\begin{align}\label{BL non-empty probability}
	\begin{cases}
	{\mathcal{A}_{i,BL}^t} = 1 - {\displaystyle e^{ - \mu _{New}^t - \mu _{Cum,i,BL}^t}},\\
	\mu _{Cum,i,BL}^t = \mu _{New}^{t - 1} + \mu _{Cum,i,BL}^{t - 1} - {g_i}{\mathcal{P}_{i,BL}^{t - 1}}{\mathcal{A}_{i,BL}^{t - 1}}.
	\end{cases}
	\end{align}

	\subsubsection{ACB Scheme} 
	The BS initially broadcasts an ACB factor ${\rm Q}_{ACB}$, and then an non-empty IoT device draws a random number $q \in [0,1]$, and compares this with ${\rm Q}_{ACB}$. 
	Each non-empty IoT device is allowed to perform RACH procedure only if $q < {\rm Q}_{ACB}$. So we have the non-restrict probability in any time slot as
	\begin{align}\label{ACB}
	\mathcal{R}_{ACB}^t = Q_{ACB}.
	\end{align}
	Substituting  \eqref{ACB} into  \eqref{non-empty probability}, we have
	\begin{align}\label{ACB non-empty probability}
	\begin{cases}
	{\mathcal{A}_{i,ACB}^t} = 1 - {e^{ - \mu _{New}^t - \mu _{Cum,i,ACB}^t}},\\
	\mu _{Cum,i,ACB}^t = \mu _{New}^{t - 1} + \mu _{Cum,i,ACB}^{t - 1} - {g_i}{\rm{Q}_{ACB}}{\mathcal{P}_{i,ACB}^{t - 1}}{\mathcal{A}_{i,ACB}^{t - 1}}.
	\end{cases}
	\end{align}
	
	\subsubsection{BO Scheme}
	The analysis of the BO scheme is similar to the ACB scheme, due to the BO procedure can be visualised as a group of IoT devices are completely barred for a time slot. 
	In the 1st time slot, none of IoT device defers the attempt, such that the transmission procedure is the same as the baseline scheme. 
	After the 1st time slot, if a RACH attempt fails, the BO mechanism is executed, where the non-empty IoT devices defer their RACH attempts and wait for $T_{BO}$ time slots. 
	Due to the BO mechanism, only non-empty IoT devices without RACH attempt failures in the last $T_{BO}$ time slots can attempt RACH, and only those IoT devices generate interference that affect the RACH success probability in the $t$th time slot.
	The non-restrict probability (i.e., the probability of non-empty IoT devices in CE group $i$ do not defer their RACH attempt) $\mathcal{R}_{i,BO}^t$ is
	\begin{align}\label{BACKOFF}
&	\mathcal{R}_{i,BO}^t = \\ \nonumber
&	\left\{ \begin{array}{l}
	1 - \displaystyle\frac{{\sum\limits_{s =1}^{t - 1} {(1 - {g_i}{\mathcal{P}_{i,BO}^{t-s}}){\mathcal{A}_{i,BO}^{t-s}}{\mathcal{R}_{i,BO}^{t-s}}} }}{{{\mathcal{A}_{i,BO}^t}}}, t \le {T_{BO}} + 1,\\
	1 - \displaystyle\frac{{\sum\limits_{s = 1}^{{T_{BO}}} {(1 - {g_i}{\mathcal{P}_{i,BO}^{t-s}}){\mathcal{A}_{i,BO}^{t-s}}{\mathcal{R}_{i,BO}^{t-s}}} }}{{{\mathcal{A}_{i,BO}^t}}},t > {T_{BO}}.
	\end{array} \right.
	\end{align}
	Substituting  \eqref{BACKOFF} into  \eqref{non-empty probability}, we have
	\begin{align}\label{BO non-empty probability}
	\begin{cases}
	{\mathcal{A}_{i,BO}^t} = 1 - \displaystyle{e^{ - \mu _{New}^t - \mu _{Cum,i,BO}^t}},\\
	\mu _{Cum,i,BO}^t = \mu _{New}^{t - 1} + \mu _{Cum,i,BO}^{t - 1} - {g_i}{\mathcal{A}_{i,BO}^{t - 1}}{\mathcal{P}_{i,BO}^{t-1}}{\mathcal{R}_{i,BO}^{t-1}}.
	\end{cases}
	\end{align}
	
	%Based on the above iteration process of , 
	%The RACH success probabilities of a ramdonly chosen IoT device for each CE group in the $m$th time slot for all RACH schemes are presented in the following subsection.
	%\subsection{CE Group 0, $R_0\le$ 2 (i = 0)}
	\subsubsection{ACB\&BO Scheme}
	The ACB\&BO scheme is an integrated scheme with 
	combined the ACB and BO schemes, so the the non-restrict probability can be derived following  \eqref{BACKOFF} as
	{
		\begin{align}\label{ACBBO}
		&\mathcal{R}_{i,ACB\&BO}^t = \\ \nonumber
	&	\left\{ \begin{array}{l}
		1 - \displaystyle\frac{\displaystyle{{\sum\limits_{s =1}^{t - 1} {(1 - {g_i}{\rm{Q}}_{ACB}{\mathcal{P}_{i,ACB\&BO}^{t-s}}){\mathcal{A}_{i,ACB\&BO}^{t-s}}{\mathcal{R}_{i,ACB\&BO}^{t-s}}} }}{}{}}{{\mathcal{A}_{i,ACB\&BO}^t}},
		\\t \le {T_{BO}} + 1,\\
		1 - \displaystyle\frac{\displaystyle{{\sum\limits_{s = 1}^{{T_{BO}}} {(1 - {g_i}{\rm{Q}}_{ACB}{\mathcal{P}_{i,ACB\&BO}^{t-s}}){\mathcal{A}_{i,ACB\&BO}^{t-s}}{\mathcal{R}_{i,ACB\&BO}^{t-s}}} }}{}}{{{\mathcal{A}_{i,ACB\&BO}^t}}},\\t > {T_{BO}}.
		\end{array} \right.
		\end{align}}
	
	Substituting  \eqref{ACBBO} into  \eqref{non-empty probability}, we have
	\begin{align}\label{ACBBO non-empty probability}
	\begin{cases}
	{\mathcal{A}_{i,ACB\&BO}^t} = 1 - \displaystyle{e^{ - \mu _{New}^t - \mu _{Cum,i,ACB\&BO}^t}},\\
	\mu _{Cum,i,ACB\&BO}^t = \mu _{New}^{t - 1} + \mu _{Cum,i,ACB\&BO}^{t - 1} \\ 
	 -{g_i}{\rm{Q}}_{ACB}{\mathcal{P}_{i,ACB\&BO}^{t - 1}}{\mathcal{A}_{i,ACB\&BO}^{t - 1}}{\mathcal{R}_{i,ACB\&BO}^{t-1}}.
	\end{cases}
	\end{align}

	The RACH success probability of a randomly chosen IoT device in each CE group in the $t$th time slot for all RACH schemes is presented in the following Theorem.
	\begin{theorem}
		The RACH success probability of a randomly chosen IoT device in each CE group in the $t$th time slot for all RACH schemes is derived as
		{
			\begin{align}\label{mslot}
			\mathcal{P}_i^t = \sum\limits_{{n_i} = 0}^\infty  {\bigg\{ {{\rm O}[{n_i},t]\Theta [{{K_i}},t]{{\Big( {1 - \Theta [{{K_i}},t]} \Big)}^{{n_i}}}} \bigg\}}.
			\end{align}}
		In  \eqref{mslot}, the probability of the number of {intra-group}  interfering IoT devices is derived as
		\begin{align}
		{\rm O}[{n_i},t]=\frac{{{c^{(c + 1)}}\Gamma (n_i + c + 1){{\Big( {\displaystyle\frac{{{\mathcal{A}_i^t}{\mathcal{R}_i^t}{\lambda _i^a}}}{{{\lambda _B}}}} \Big)}^{n_i}}}}{{\Gamma (c + 1)\Gamma (n_i + 1){{\Big( {\displaystyle\frac{{{\mathcal{A}_i^t}{\mathcal{R}_i^t}{\lambda _i^a}}}{{{\lambda _B}}} + c} \Big)}^{n_i + c + 1}}}}, 
		\end{align}
		where $\mathcal{A}_i^t$ and $\mathcal{R}_i^t$ are iteratively updated using  \eqref{BL non-empty probability}- \eqref{ACBBO non-empty probability}
		%\eqref{ACB non-empty probability},  \eqref{BO non-empty probability} and  \eqref{ACBBO non-empty probability},  and $\mathcal{R}_i^t$ is iteratively updated using  \eqref{BL},  \eqref{ACB},  \eqref{BACKOFF} and  \eqref{ACBBO}, 
		respectively, for Baseline, ACB, BO and ACB\&BO schemes;
		and the preamble transmission success probability with ${K_i}$ repetitions is derived as		
			\begin{align}
			&\Theta [{K_0},t]= \sum\limits_{{k_0} = 1}^{{K_0}} {{( - 1)}^{{k_0} + 1}}\Big( \begin{array}{l}
				{K_0}\\
				{k_0}
				\end{array}\Big)
				\\ \nonumber
			&	\exp \Big( - \frac{{{l_0}{\gamma _{th}}{\sigma ^2} }}{\rho}
				-\frac{{{2({{\gamma_{th}}}{})^{\frac{2}{\alpha}}\mathcal{A}_0^t\mathcal{R}_0^t\lambda _{0}^a}\gamma \Big( {2,\pi {\lambda _B}{{( {\frac{P}{\rho }} )}^{\frac{2}{\alpha }}}} \Big)}}{{{{ {\lambda _B}}}\Big( {1 - \exp \big( - \pi {\lambda _B}{{( {\frac{P}{\rho }} )}^{\frac{2}{\alpha }}}\big)} \Big)}}\mathcal{F_0}
				\bigg) 
			\end{align}
		for CE group 0,
		and 
		{
			\begin{align}
		&	\Theta [{{K_i}},t]=\sum\limits_{{{k_i}} = 1}^{{{K_i}}} {{( - 1)}^{{{k_i}} + 1}}\Big( \begin{array}{l}
				{{K_i}}\\
				{{k_i}}
				\end{array} \Big)
				\\ \nonumber
			&	\int\limits_{{D_{i - 1}}}^{{D_i}} {\exp\Big( -  {{\frac{{{l_i\gamma _{th}\sigma ^2}r^\alpha }}{P}}}\Big)}
				\exp \big( - 2\pi {\mathcal{A}_i^t\mathcal{R}_i^t\lambda _{i}^a}\mathcal{F}_i \big)
				f_R(r)
				d{r} 
			\end{align}}
		for CE group 1 and 2, with $\mathcal{F}_i$ and $f_R(r)$ given in \textbf{Lemma 1-Lemma 3}.
	\end{theorem}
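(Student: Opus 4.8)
The plan is to assemble \textbf{Theorem 4} by combining the single–slot analysis of Sections III-A and III-B with the temporal queue-evolution recursions of Section IV. The structure of \eqref{mslot} mirrors exactly that of \eqref{RACH}: part I is the PMF of the number of intra-group interfering devices, part II is the preamble transmission success probability of the typical device, and part III is the conditional non-collision probability. The only thing that changes relative to \textbf{Theorem 1} and \textbf{Theorem 2} is that the active probability $\mathcal{A}_i^1\mathcal{R}_i^1$ appearing in the first time slot must be replaced, in the $t$th slot, by the time-varying product $\mathcal{A}_i^t\mathcal{R}_i^t$. So the first step is to argue that the single–slot derivation goes through verbatim with this substitution, which is justified by the decorrelation approximation stated before \eqref{N=n}: since random preamble selection re-randomizes the interfering set each TTI, the spatial configuration of active devices in slot $t$ is (approximately) an independent PPP whose density is thinned by the slot-$t$ active probability $\mathcal{A}_i^t\mathcal{R}_i^t$ rather than $\mathcal{A}_i^1\mathcal{R}_i^1$.

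Concretely, I would proceed as follows. First, substitute $\mathcal{A}_i^t\mathcal{R}_i^t$ for $\mathcal{A}_i^1\mathcal{R}_i^1$ in \eqref{N=n} to obtain ${\rm O}[n_i,t]$; this is immediate from the cited approximate Voronoi-cell PMF \cite[Eq.(3)]{yu2013downlink} applied to the thinned density $\mathcal{A}_i^t\mathcal{R}_i^t\lambda_i^a$. Second, invoke \textbf{Lemma 2} for CE group 0 and \textbf{Lemma 3} (Case 1) or \textbf{Lemma 4} (Case 2) for CE groups 1 and 2, again with $\mathcal{A}_i^t\mathcal{R}_i^t$ in place of $\mathcal{A}_i^1\mathcal{R}_i^1$ inside the Laplace transform of the aggregate interference, and plug the result through \eqref{SI} to get the preamble success probability; this yields $\Theta[K_0,t]$ and $\Theta[K_i,t]$ exactly as displayed. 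Third, observe that parts II and III of \eqref{RACH} depend on $n_i$ only through the exponent in part III, so summing over $n_i$ against ${\rm O}[n_i,t]$ gives \eqref{mslot}. Fourth, close the recursion: the quantities $\mathcal{A}_i^t$ and $\mathcal{R}_i^t$ that enter ${\rm O}[n_i,t]$ and $\Theta[\cdot,t]$ are supplied by the scheme-specific iterations \eqref{BL non-empty probability}, \eqref{ACB non-empty probability}, \eqref{BO non-empty probability}, \eqref{ACBBO non-empty probability}, which in turn feed on $\mathcal{P}_i^{t-1}$; so $\mathcal{P}_i^t$ is obtained by unrolling the iteration from the initial conditions $\mathcal{A}_i^1 = 1-e^{-\mu_{New}^1}$, $\mathcal{R}_i^1$ (scheme-dependent), and $\mathcal{P}_i^1$ from \textbf{Theorem 1}/\textbf{Theorem 2}.

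The one genuinely new ingredient, and the main obstacle, is justifying the queue-evolution step, i.e., that $N_{Cum,i}^t$ is (approximately) Poisson with the intensity $\mu_{Cum,i}^t$ given by the deterministic recursion $\mu_{Cum,i}^t = \mu_{New}^{t-1} + \mu_{Cum,i}^{t-1} - g_i\mathcal{P}_i^{t-1}\mathcal{R}_i^{t-1}\mathcal{A}_i^{t-1}$ in \eqref{non-empty probability}. The idea is a mean-field / mass-conservation argument: each device's buffer gains a Poisson($\mu_{New}^{t-1}$) batch of new arrivals, and loses (at most) one packet with probability equal to the marginal success probability of an active device, which is $g_i\mathcal{P}_i^{t-1}$ weighted by the probability $\mathcal{A}_i^{t-1}\mathcal{R}_i^{t-1}$ that the device is active (the $g_i$ thinning accounts for the device actually belonging to CE group $i$). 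Treating successive arrival and departure events as independent across slots—the same decorrelation heuristic used for the interference—and approximating the resulting compound process by a Poisson process with matched mean yields the stated recursion; this is precisely the approximation developed in the authors' prior work \cite{nan2018random}, which I would cite rather than re-derive. Once this recursion is granted, everything else is bookkeeping, and for the BO and ACB\&BO schemes one additionally needs the expression \eqref{BACKOFF}/\eqref{ACBBO} for $\mathcal{R}_i^t$, which follows by counting the devices barred in each of the last $T_{BO}$ slots due to a prior RACH failure and normalizing by $\mathcal{A}_i^t$.
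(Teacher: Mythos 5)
Your proposal is correct and follows essentially the same route as the paper: the paper obtains Theorem 3 by substituting the time-varying active probability $\mathcal{A}_i^t\mathcal{R}_i^t$ into the single-slot expressions of Theorems 1 and 2 (justified by the same per-TTI decorrelation approximation), and closes the loop with the scheme-specific recursions for $\mathcal{A}_i^t$ and $\mathcal{R}_i^t$, citing \cite{nan2018random} for the Poisson approximation of the accumulated-packet queue exactly as you do.
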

	
	\section{Simulation and Discussion}
	%In this section, we present the simulation results to verify our analytical results, which are validated via Monte Carlo simulation.
	In this section, the derived analytical results are validated via Monte Carlo simulations.
	The system simulation parameters are summarized in Table I following \cite{name2015}.
	\begin{table}[htbp!]
		\centering
		\caption{ Simulation Parameters}
		{\renewcommand{\arraystretch}{1.2}
			\begin{tabular}{|c|c|}
				\hline
				
				NB-IoT Bandwidth & 180 kHz   \\ \hline
				NPRACH Subcarrier Spacing &  3.75 kHz  \\ \hline
				Symbol Group  &  1 CP and 5 symbols \\ \hline
				NPRACH Band (3 CE groups) &  12 ,12 and 24 subcarriers  \\ \hline
				Transmit Power  & {35} dBm (DL), {22} dBm (UL) \\ \hline
				Noise Figure  & 5 dB (DL), 3 dB (UL) \\ \hline
			\end{tabular}
		}
		\label{table_accord}
	\end{table}
	The BSs and IoT devices are deployed via independent PPPs in a {40000} km$^2$ circle area.
	The real buffer at each IoT device is simulated to capture the packets arrival and accumulation process evolved over time.
	Furthermore, in the ACB scheme, we also simulate that each IoT device
	generates a random number $q ∈ [0, 1]$ and compares with the ACB factor $Q_{ACB}$ to determine whether the current RACH is deferred, and in the Back-Off scheme, we capture all RACH failures and practically defer RACH attempts of these IoT devices for the next $T_{BO}$ time slots.
	Unless otherwise stated, we set
	{$\lambda_{\rm{B}}=0.1$ BSs/km$^2$, $\lambda_{\rm{D}}=10$ IoT devices/km$^2$, $\gamma_{th}= 10$ dB, $\alpha = 4$, and $\rho =−120$ dBm. 
		The noise is $\sigma^2$ = −174+5+10log$_{10}$(180000) = −116.4 dBm and $\omega$ = −174+3+10log$_{10}$(3750) = −135.3 dBm.
		The target minimum SNRs for the three CE groups are $\delta_{1}=35$ dB and $\delta_{2}=30$ dB, respectively.
		We choose the same new packets arrival rate for each time slot ($\mu _{New}^1 = \mu _{New}^2 = ... = \mu _{New}^m = 0.1$ packets/time slot).
		Unless otherwise stated, we consider $T_{BO} = 2$ for BO scheme and $Q_{ACB} = 0.6$ for the  ACB scheme.
	}
	\begin{figure}[htbp!]
		\begin{center}
		%	\begin{minipage}[t]{0.48\textwidth}
				\centering
				\includegraphics[width=3.0in,height=2.3in]{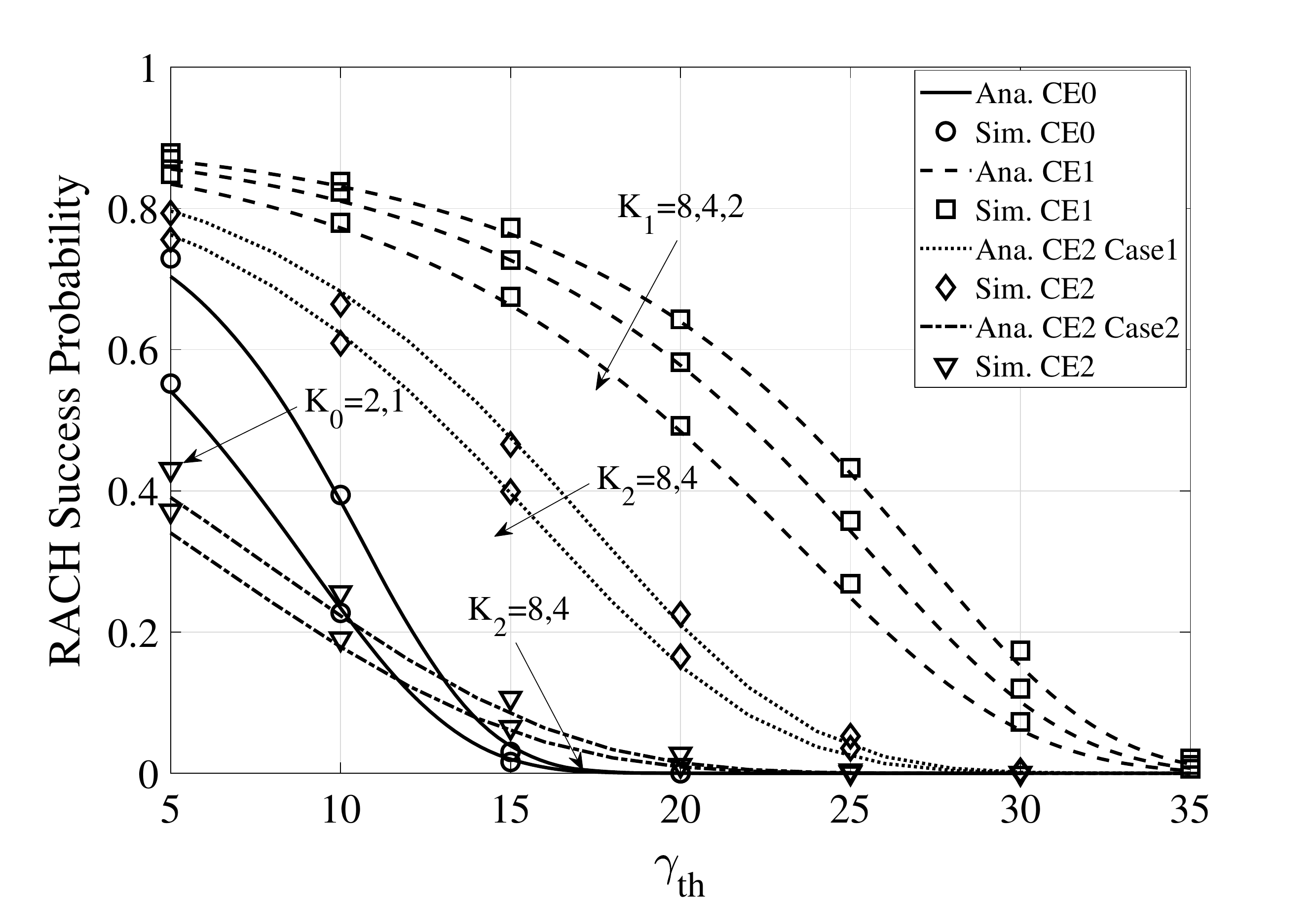}
				\vspace*{-0.1cm}
				\caption{\scriptsize  RACH success probability for three CE groups versus $\gamma_{th}$ in the 1st time slot}
			%\end{minipage}
			\label{fig:3}
%			\begin{minipage}[t]{0.48\textwidth}
%				\centering
%				\includegraphics[width=3.0in,height=2.4in]{sin_0712_BLACK_COMPARE.eps}
%				\vspace*{-0.4cm}
%				\caption{\scriptsize RACH success probability for three CE groups and single CE group}
%			\end{minipage}
%			\label{fig:4}
		\end{center}
	\end{figure}
	{Fig. 3 plots the RACH success probability of a randomly chosen IoT device in the three CE groups in the 1st time slot using  \eqref{class0RACH} and  \eqref{class2RACH} versus the SINR threshold for various repetition values.
		We first observe a good match between the analysis and the simulation results, which validates the accuracy of the developed mathematical framework.
		We observe that the RACH  success probability degrades with the increase of the SINR threshold. 
		According to  \eqref{SI}, increasing $\gamma_{th}$ leads to lower preamble transmission success probability of both interfering IoT devices and serving IoT device, thereby decreasing the overall RACH success probability. 
		We also observe that the RACH success probabilities of IoT devices in CE group 1 and CE group 2 in Case 1 are higher than that in CE group 0, which indicates that increasing the repetition value leads to higher RACH success probability and could ensure the RACH performance with extended coverage.
		In addition, we note that the RACH success probability of the CE group 2 in Case 2 is low as there are a large number of  IoT devices in CE 2 in Case 2, where the external radius of  CE group 2 equals to the Voronoi cell radius.
	}

	% In addition, in the scenario where the devices are not categorized into CE groups, the RACH success probability does not improve substantially  with increasing  the configured repetition value.

	% \begin{figure}
	% 	\centering
	% 	\includegraphics[width=3.5in,height=2.5in]{02sin.eps}
	% 	\caption{RACH success probability for three CE groups versus $\gamma_{th}$ in the 1st time slot}
	% 	\label{fig:my_label}
	% \end{figure}
	% \begin{figure}
	% 	\centering
	%   \vspace{0 pt}
	% 	\includegraphics[width=3.5in,height=2.5in]{03lambda.eps}
	% 	\caption{RACH success probability for three CE groups versus density ratio $\lambda_D/\lambda_B$ in the 1st time slot}
	% 	\label{fig:my_label}
	% \end{figure}
	
	% \begin{figure}[htbp!]
	%     \begin{center}
	%     \begin{minipage}[t]{0.48\textwidth}
	%     \centering
	%         \includegraphics[width=3.0in,height=2.4in]{02sin.eps}
	%         \vspace*{-0.4cm}
	%         \caption{\scriptsize RACH success probability for three CE groups versus $\gamma_{th}$ in the 1st time slot}
	%     \end{minipage}
	%     \label{fig:10}
	%         \begin{minipage}[t]{0.48\textwidth}
	%     \centering
	%         \includegraphics[width=3.0in,height=2.4in]{03lambda.eps}
	%         \vspace*{-0.4cm}
	%         \caption{\scriptsize RACH success probability for three CE groups versus density ratio $\lambda_D/\lambda_B$ in the 1st time slot}
	%         \end{minipage}
	%         \label{fig:11}
	%     \end{center}
	% \end{figure}
	
		\begin{figure}[htbp!]
		\begin{center}
			%	\begin{minipage}[t]{0.48\textwidth}
%			\centering
%			\includegraphics[width=3.0in,height=2.4in]{sin_0712_black_three.eps}
%			\vspace*{-0.1cm}
%			\caption{\scriptsize  RACH success probability for three CE groups versus $\gamma_{th}$ in the 1st time slot}
%			%\end{minipage}
%			\label{fig:3}
			%			\begin{minipage}[t]{0.48\textwidth}
							\centering
							\includegraphics[width=3.0in,height=2.3in]{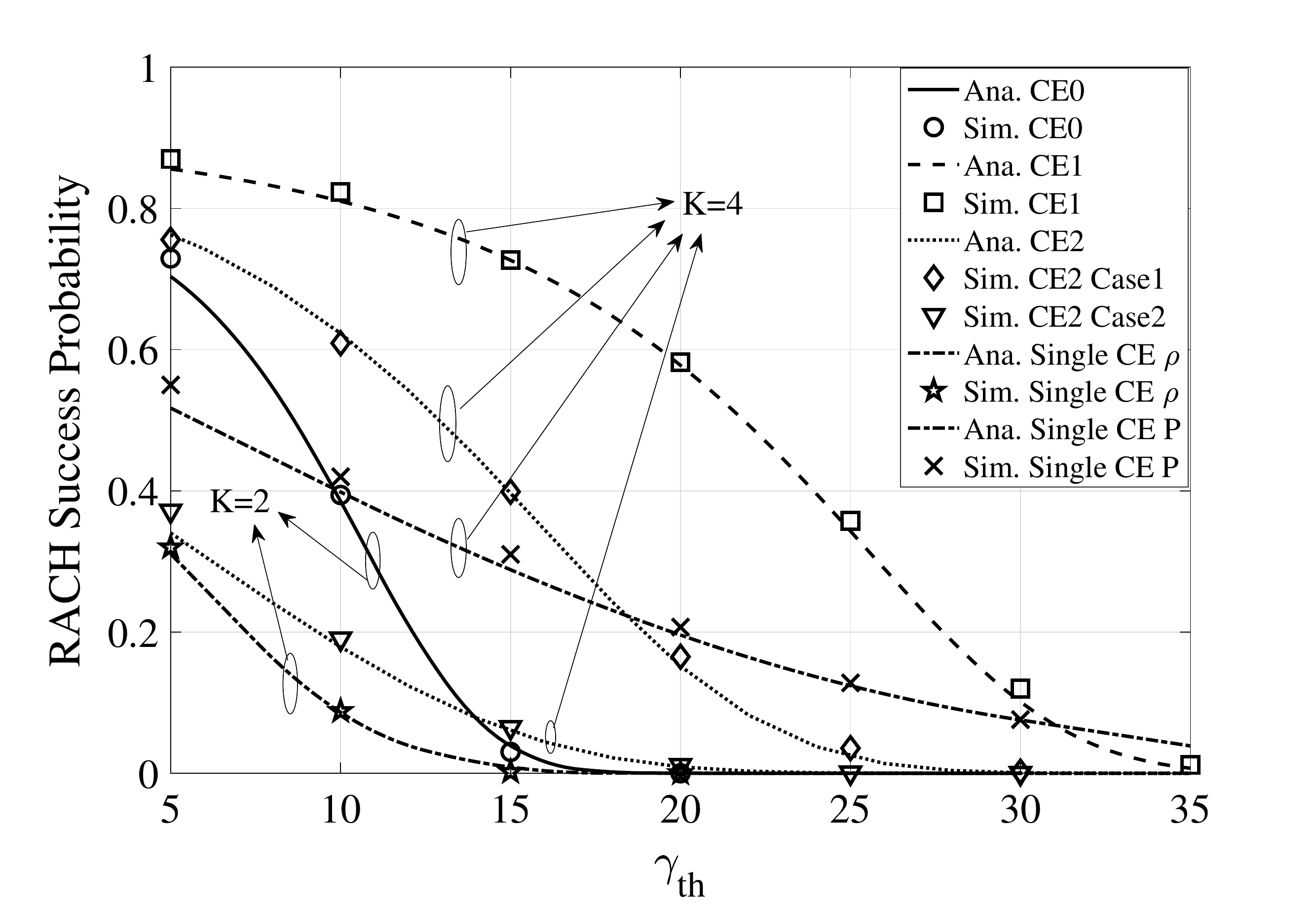}
							\vspace*{-0.1cm}
							\caption{\scriptsize RACH success probability for three CE groups and single CE group}
			%			\end{minipage}
						\label{fig:4}
		\end{center}
	\end{figure}
	{Fig. 4 compares the RACH success probabilities of the device in an NB-IoT network with three CE groups with that in a single CE group NB-IoT network (using power control threshold $\rho$ and fixed transmit power $P$, respectively).
		It is obvious that the RACH success probabilities of the devices in three CE groups (except CE group 2 in Case 2) greatly outperform that in a single CE group network.
		For example, 1) the RACH success probability of the device in CE group 1 with 4 repetitions is two times more than that in a single CE group with the same repetition value and same transmit power when the SINR threshold $\gamma_{th}\le 25$ dB; 2) the  RACH success probability of the device in CE group 0 with 2 repetitions is two times more than that in a single CE group with the same repetition value and power control when the SINR threshold $\gamma_{th}\le 10$ dB.
		Interestingly, the RACH success probabilities of the devices in CE group 2 in Case 2 are lower than those in the single CE group.
		This is due to that a lot of IoT devices are in CE group 2 but the configured preamble set $S_2=24$ is much smaller than the total number 48 for a single CE group.
		Thus,  categorizing the IoT devices into up to three CE groups is not always beneficial to all the groups,  which is affected by the choice of the categorizing parameters.
	}

	\begin{figure}[htbp!]
		\begin{center}
			% \begin{minipage}[t]{0.48\textwidth}
			% \centering
			%     \includegraphics[width=3.0in,height=2.4in]{sin_0712_red.eps}
			%     \vspace*{-0.4cm}
			%     \caption{\scriptsize  RACH success probability for three CE groups and single CE group}
			% \end{minipage}
			\label{fig:5}
			\begin{minipage}[t]{0.48\textwidth}
				\centering
				\includegraphics[width=3.0in,height=2.3in]{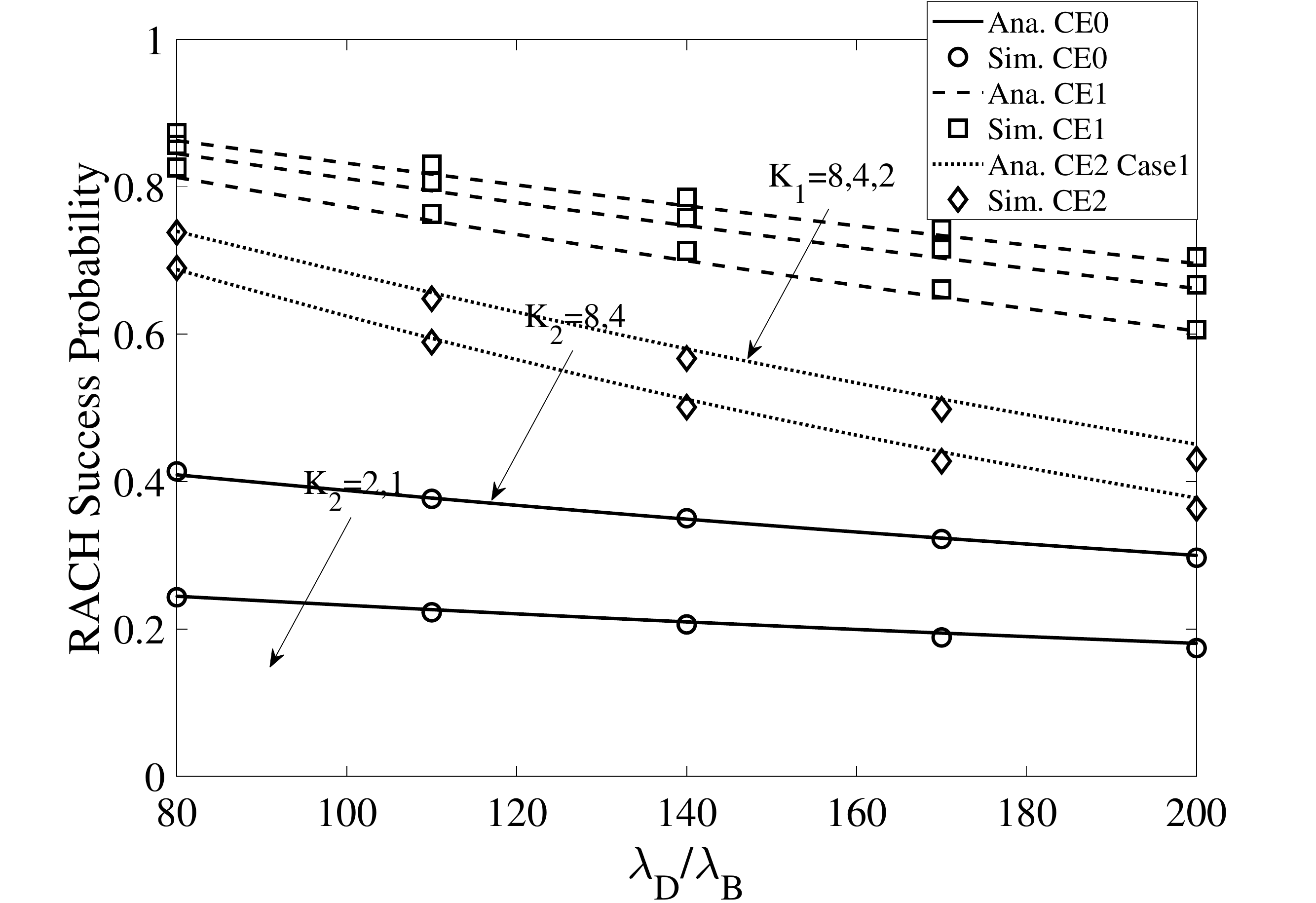}
				\vspace*{-0.1cm}
				\caption{\scriptsize RACH success probability for three CE groups versus
					density ratio $\lambda_D/\lambda_B$ in the 1st time slot}
			\end{minipage}
			\label{fig:6}
		\end{center}
	\end{figure}
	{Fig. 5} plots the RACH success probabilities of a randomly chosen IoT device for three CE groups versus the density ratios $\lambda_D/\lambda_B$ in the 1st time slot for various repetition values ${K_i}$. 
	We first observe that the RACH success probability decreases with the increase of the density ratio between IoT devices and BSs ($\lambda_D/\lambda_B$), which is due to the following two reasons: 1) increasing the number of IoT devices generating interference leads to lower received SINR at the BS; 2) increasing the number of IoT devices leads to a higher probability of collision. 
	\begin{figure*}[htbp!]
		\centering
		\subfigure[CE group 0] {\includegraphics[width=3in,height=2.2in]{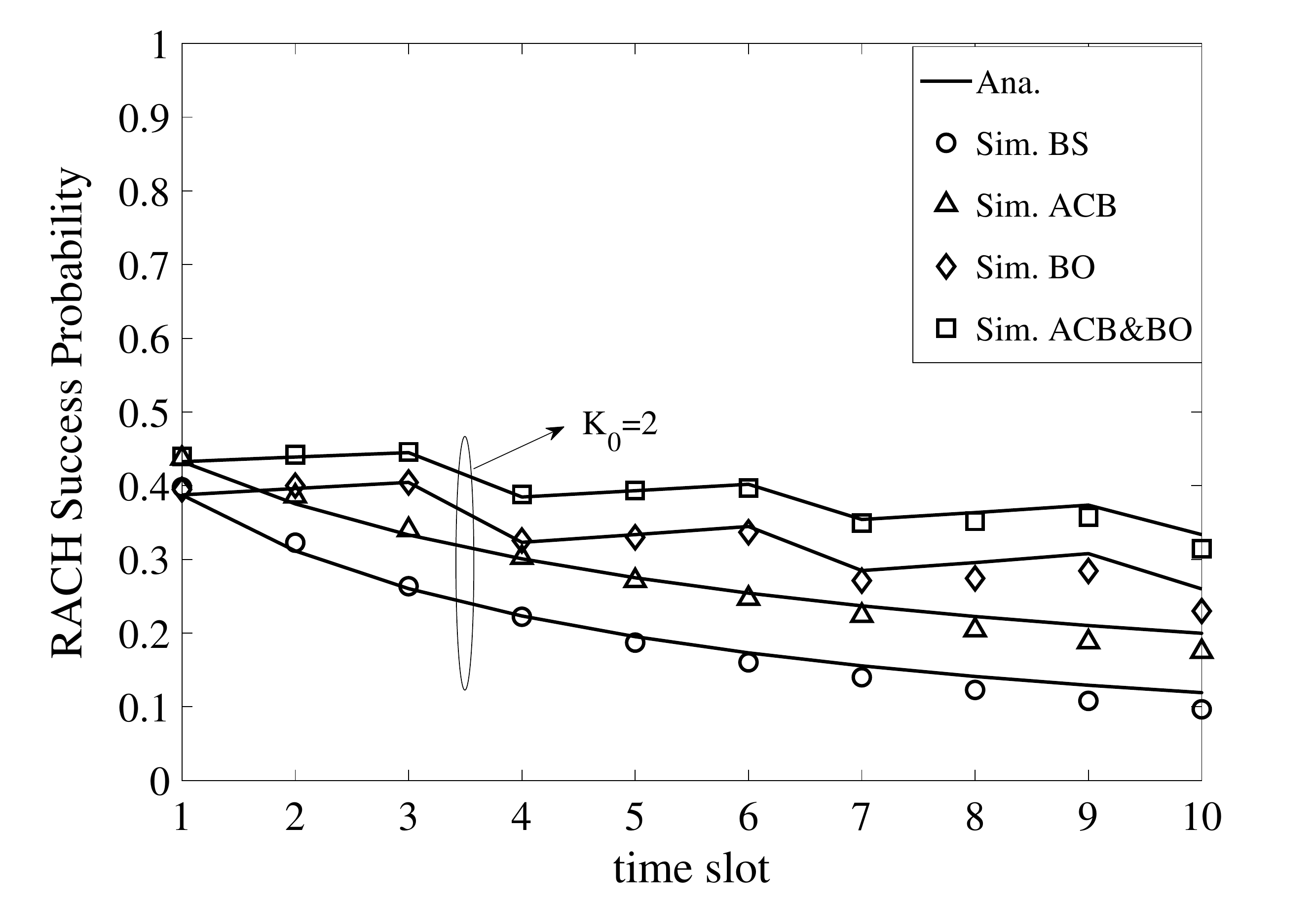}}
		\subfigure[CE group 1] {\includegraphics[width=3in,height=2.2in]{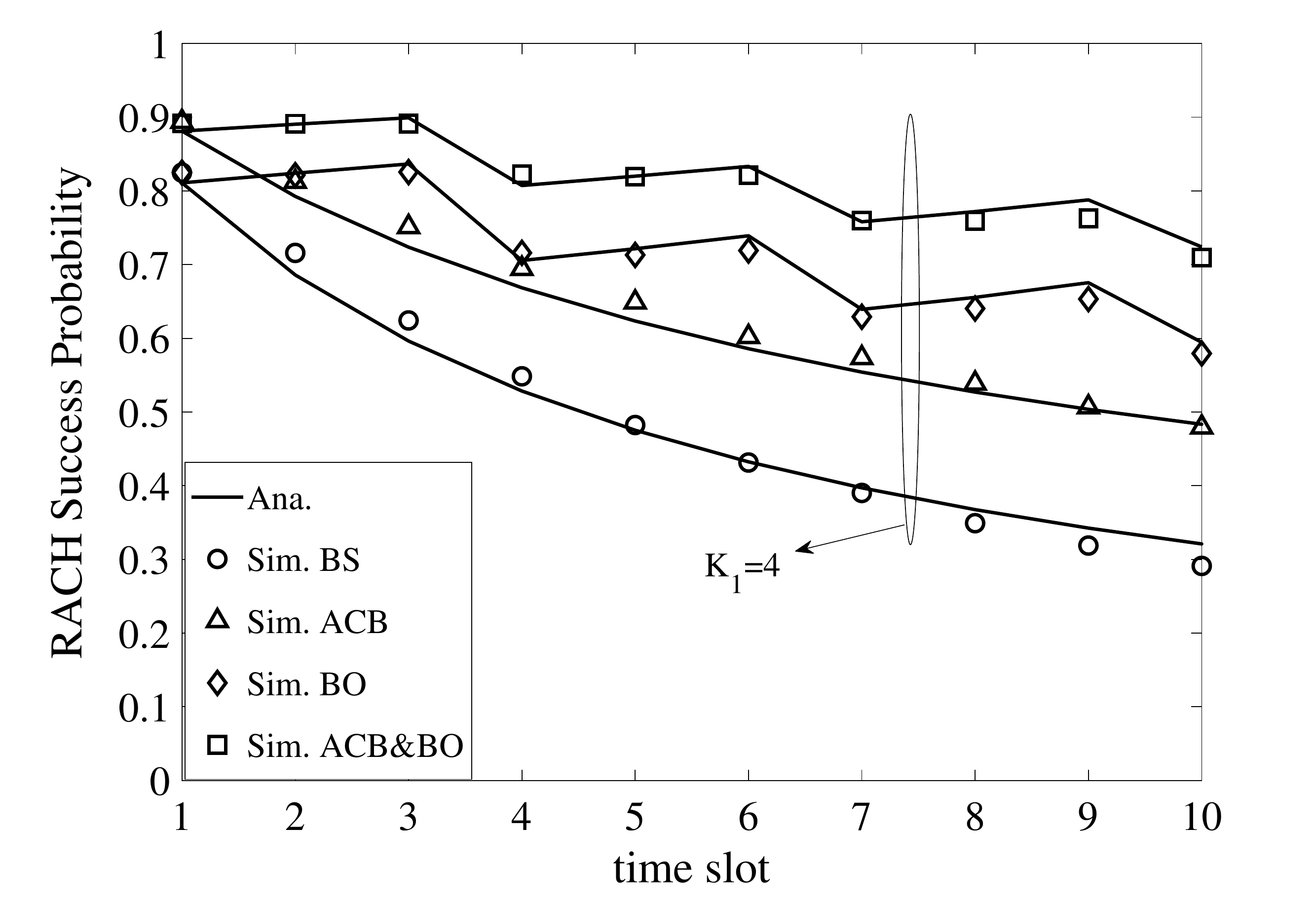}}
		\subfigure[CE group 2 in case 1] {\includegraphics[height=2.2in,width=3in]{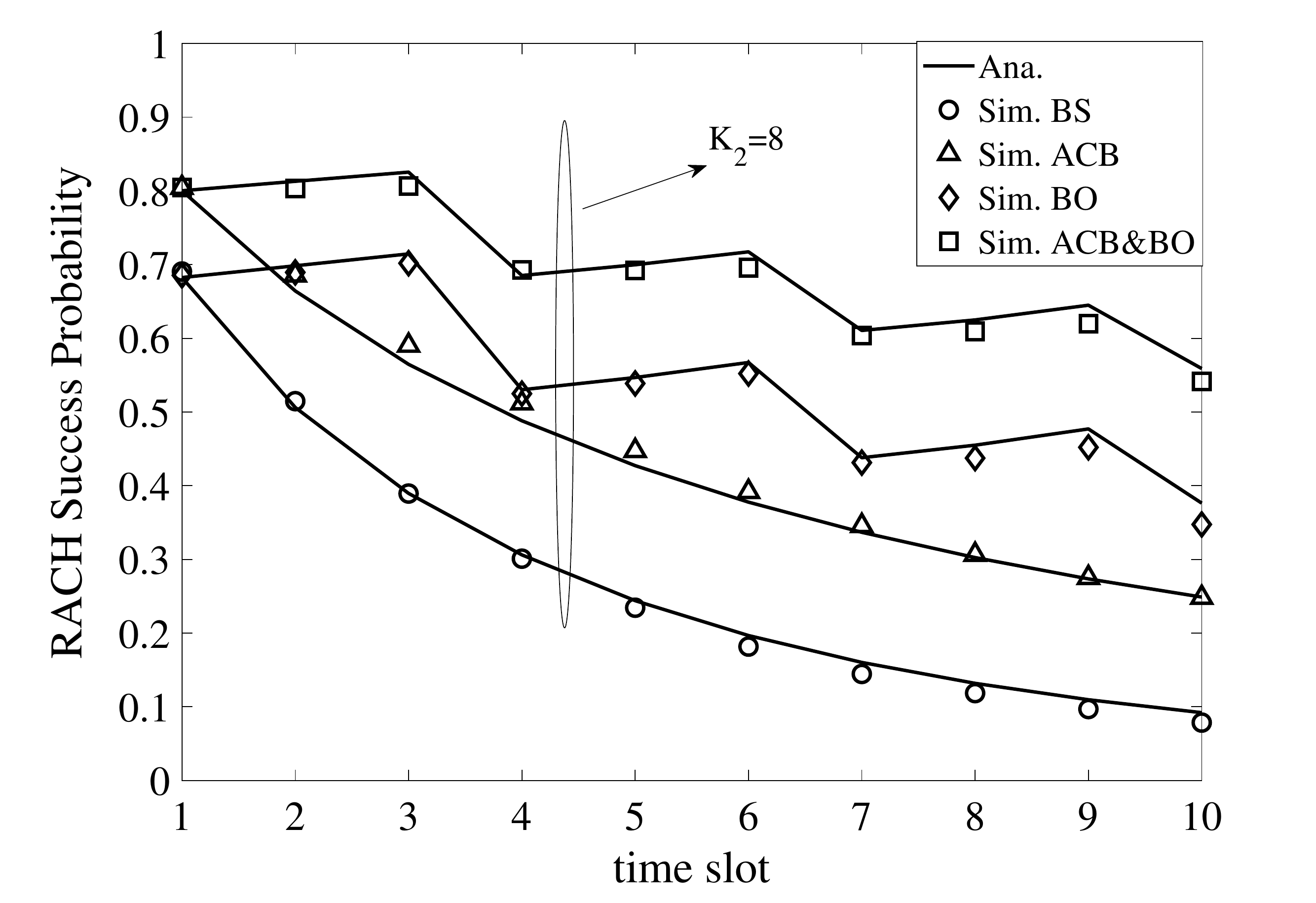}}
		\subfigure[CE group 2 in case 2] {\includegraphics[height=2.2in,width=3in]{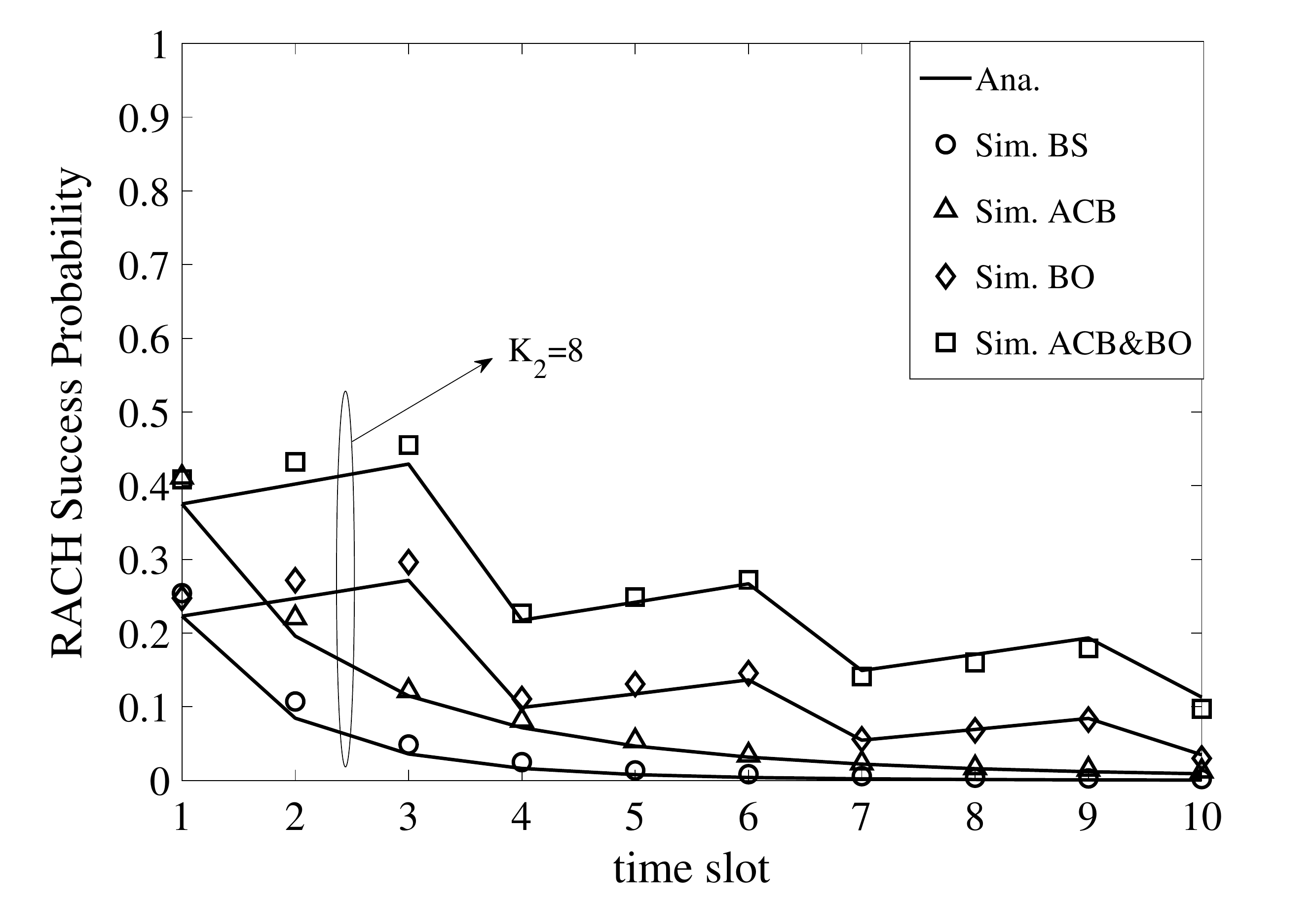}}
		\caption{ RACH success probability for three CE groups in each time slot with four RACH schemes }
		\label{fig5}
	\end{figure*}
	{We also observe that when the density ratio $\lambda_D/\lambda_B$ increases, it has the most impact on the CE group 2 and the least impact on the CE group 0, which reveals that configuring more resources for CE group 2 will ensure the massive connectivity in the NB-IoT networks.
		In both Fig. 3 and Fig. 5, it is obvious that increasing the repetition value leads to higher RACH success probabilities.
		However,  it should be noted that if the repetition value is overestimated (e.g., $K_1$=8 in CE group 1 in Fig. 4), the IoT device costs double resources than that with $K_1$=4, whereas the RACH success probabilities only improve 0.02, which will waste the potential resource for data transmission and lead to lower resource efficiency.
	}

	{Fig. 6 plots the RACH success probabilities
		of a random IoT device in each time slot with the baseline scheme, the ACB scheme, the BO scheme and the ACB $\&$ BO scheme for three CE groups, respectively. 
		For each scheme, the RACH success probabilities decrease with increasing time, due to that the intensity of interfering IoT devices grows with increasing non-empty probability of each IoT device, caused by the increasing average number of accumulated packets. Interestingly, we observe that the RACH success probabilities of a random IoT device for all three CE groups in each time slot  always follow the performance ACB$\&$BO ($Q_{ACB} = 0.6,  T_{BO}=2)>$  BO $>$ ACB $>$ baseline scheme (except the 1st time slot, where the BO procedure is not executed), this is because more strict congestion control schemes reduce the access requests from the side of IoT devices, which decrease the aggregate interference and collision probability.} 
	{For example, according to \eqref{ACB} and \eqref{BACKOFF}, the RACH success probabilities are lower than 70$\%$ leading to 49$\%$ IoT devices deferring their RACH attempts in the BO scheme, but the ACB scheme leads to only  40$\%$ deferring their RA attempts (i.e., $Q_{ACB}$ = 0.6), and thus the probabilities of deferring RACH attempt follows ACB $\&$ BO $>$  BO $>$ ACB $>$ Baseline.}
	
	% This is because that only the non-restriction probability of the BO scheme is affected by the RACH success probability ${\mathcal{P}_{i,BO}^{m-s}}$, the non-empty probability ${\mathcal{A}_{i,BO}^{m-s}}$ , the non-restriction probability ${\mathcal{R}_{i,BO}^m}$ in former time slots as shown in (28), whereas non-restriction probability of other two schemes are constant values.
	% The simulation gaps of other probabilities will lead to a larger gap for the non-restriction probability of the BO scheme, which leads to the worst fit.

	% \begin{figure}
	% 	\centering
	%   \vspace{0 pt}
	% 	\includegraphics[width=3.5in,height=2.5in]{ts5_acb.eps}
	% 	\caption{RACH success probability at the 5th time slot versus ACB factors}
	% 	\label{fig:my_label}
	% \end{figure}
	
	% \begin{figure}
	% 	\centering
	%   \vspace{0 pt}
	% 	\includegraphics[width=3.5in,height=2.5in]{ts5_bo.eps}
	% 	\caption{RACH success probability in the 5th time slot versus BO factors}
	% 	\label{fig:my_label}
	% \end{figure}
	
	\begin{figure}[htbp!]
		\begin{center}
			% \centering
		%	\begin{minipage}[t]{0.48\textwidth}
				\centering
				\includegraphics[width=3.0in,height=2.3in]{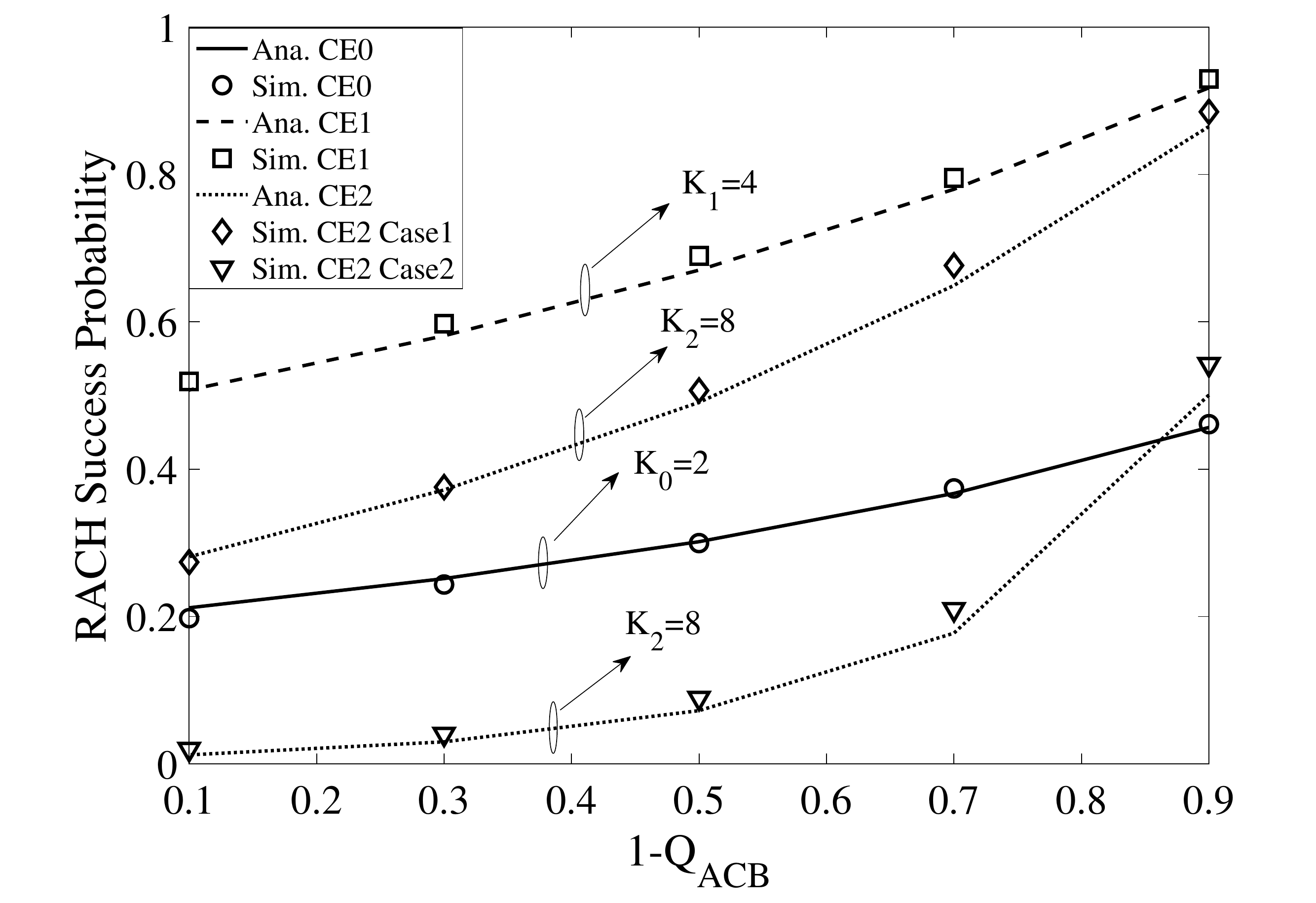}
				\vspace*{-0.1cm}
				\caption{\scriptsize RACH success probability at the 5th time slot versus ACB factors}
		%	\end{minipage}
			\label{fig:7}
%			\begin{minipage}[t]{0.48\textwidth}
%				\centering
%				\includegraphics[width=3.0in,height=2.45in]{BO_b.eps}
%				\vspace*{-0.4cm}
%				\caption{\scriptsize RACH success probability in the 5th time slot versus BO factors}
%			\end{minipage}
%			\label{fig:8}
		\end{center}
	\end{figure}

	{
		Fig. 7 plots the RACH success probabilities of the ACB scheme in the 5th time slot versus the non-ACB probability 1-$Q_{ACB}$ for three CE groups, respectively. In Fig. 7, the RACH success probabilities increase with increasing 1-$Q_{ACB}$ (i.e., decreasing $Q_{ACB}$)  due to that the increasing number of IoT devices deferring access requests leads to the decrease of interference and collision probability. 
		It should be noted that the effect of the ACB scheme is more obvious in the scenario of massive connectivity, e.g, the CE group 2. }

	\begin{figure}[htbp!]
		\begin{center}
			% \centering
%			\begin{minipage}[t]{0.48\textwidth}
%				\centering
%				\includegraphics[width=3.0in,height=2.4in]{ACB_b.eps}
%				\vspace*{-0.4cm}
%				\caption{\scriptsize RACH success probability at the 5th time slot versus ACB factors}
%			\end{minipage}
		%	\label{fig:7}
		%	\begin{minipage}[t]{0.48\textwidth}
				\centering
				\includegraphics[width=3.0in,height=2.3in]{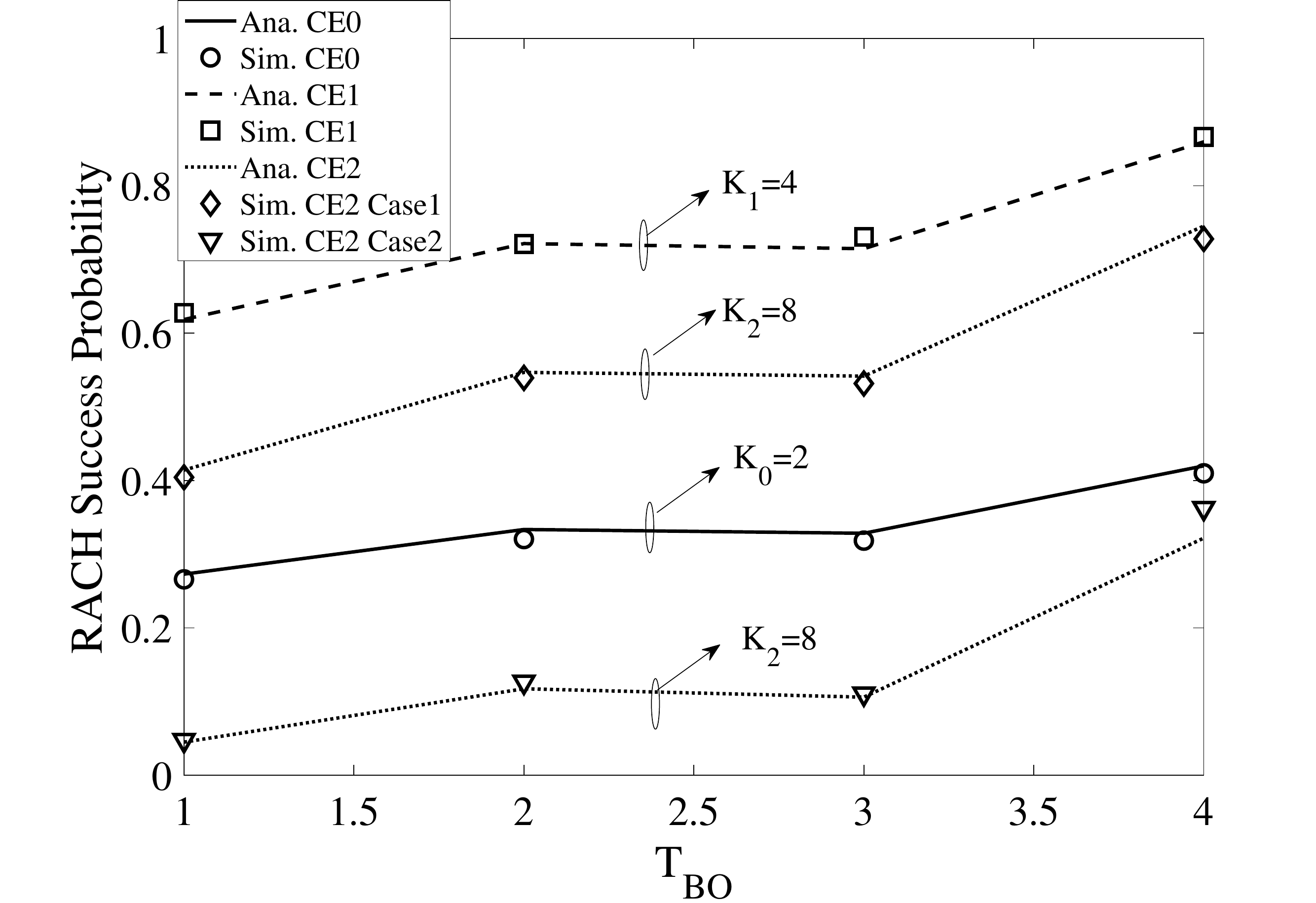}
				\vspace*{-0.1cm}
				\caption{\scriptsize RACH success probability in the 5th time slot versus BO factors}
		%	\end{minipage}
			\label{fig:8}
		\end{center}
	\end{figure}
	
	{
		Fig. 8 plots the RACH success probabilities of the BO scheme in the 5th time slot versus the BO factor $T_{BO}$ for three CE groups, respectively. In Fig. 8, the RACH success probabilities increase with increasing $T_{BO}$, due to that the increasing number of IoT devices deferring access requests leads to the reduction of interference and collision probability. 
		Interestingly, the RACH success probabilities decrease a little bit from $T_{BO}=2$ to $T_{BO}=3$.
		This is due to the factor that when $T_{BO}=3$, the IoT devices failing to access in the 1st time slot reattempt the RACH in the 5th time slot after a backoff period 3 time slots, 
		which leads to the increase of the interference and collision probability in the 5th time slot. 
	}
	
	%Fig. 9 plots the RACH success probabilities at the 6th time slot for three CE groups with five RA schemes versus the density ratio $\lambda_D/\lambda_B$. In Fig. 9, we observe that the RACH success proabilities follow ACB $\&$ BO (Q_{ACB}=0.4) > ACB(Q_{ACB}=0.4)>BO>ACB(Q_{ACB}=0.8)>BS. 

	%We also see that the RACH success probabilities of IoT devices in CE group 1 with repetition value $M_1=4$ are higher than those with repetition value $M_1=2$, while the former are higher than the RACH success probabilities of IoT devices in CE group 0 with $M_0=2$ and the later are lower than the RACH success probabilities of IoT devices in CE group 0 with $M_0=2$.
	%We also see that with  the same repetition value in CE group 1 and CE group 0, i.e., $M_1=M_0=2$, the RACH success probability of IoT devices in CE group 0 outperforms that in CE group 1, which is mainly due to the shorter communication distance in CE group 0.
	%Interestingly, the RACH success probability of the IoT device in the CE group 1 with the repetition value $M_1=4$ outperforms that in CE group 0 with $M_0=2$.
	%Our results demonstrate that three different CE groups configurations are superior to the single CE group configuration.

	\section{Conclusion}
	In this paper, we developed a new spatio-temporal mathematical model to analyze the RACH  success probability under the repetition scheme in the NB-IoT networks with  three CE groups in each cell, where multiple  IoT devices  simultaneously  start their  RACH  procedure.
	% considered  multiple  IoT devices  simultaneously  start their  RACH  procedure in  a  NB-IoT  system  configuring  three  repetition  parameters  for  three CE groups in each cell.
	% We developed a spatio-temporal mathematical model to analyze the RACH  success probability under the repetition scheme in the NB-IoT network with three CE groups. 
	We first obtained the approximate characterization of interference experienced by a randomly chosen IoT device in each CE group. 
	We then derived the analytical expression for the RACH success probability of the IoT device in the first time slot for each CE group taking into account both preamble transmission outage and collision. 
	Next, we extended the RACH success probability analysis for three CE groups to multiple time slots by modelling the queue evolution with the Baseline, Access Class Barring, Back-Off and hybrid ACB and BO schemes.
	{
		Our numerical results have shown that 
		1) the RACH success probabilities of the devices in three CE groups  outperform that of a single CE group network (almost two times);
		2)  categorizing the IoT devices into three CE groups is not always beneficial to all the groups,   which is affected by the choice of the categorizing parameters;
		% 2) the massive number of IoT devices have the most impact on the RACH access probabilities of CE group 2;
		3) the impact of increasing repetition value on the RACH access probabilities of CE group 1 is not so much;
		4) the RACH success probabilities follows ACB$\&$BO $>$  BO $>$ ACB $>$ baseline scheme.}
	
	% thus it is necessary for the BS to categorize its IoT devices into up to three CE groups to provide better RACH performance.
	% We also shown that increasing the repetition value can improve the RACH  success probability,  especially for the three CE groups, thus could ensure the RACH performance with extended coverage.
	% In addition, we have shown that the access class barring and back-off schemes outperform the baseline scheme in terms of the RACH success probability, 
	% and that the hybrid ACB and BO scheme performs closing to the optimal performing scheme.}
	
	%We also show that the baseline scheme outperforms the ACB and back-off schemes in terms of the number of received packets per BS for light traffic, and the back-off scheme performs closing to the optimal performing scheme in both light and heavy traffic conditions.
	
	% Our derived analytical results are verified by the realistic simulations capturing the evolution of packets in each IoT device. 
	% Our mathematical model and analytical framework can be applied to evaluate the RACH performance of various types of RACH schemes in NB-IoT networks with  three CE groups. 

	\appendices
	\numberwithin{equation}{section}
	
	\section{a proof of lemma 1 }
	In our approximation approach, the baseline PPP $\Phi_{\rm{D}}$ is independently thinned such that the resulting densities of the PPPs in each CE group are the same as PHP ${\Phi _i}$, which we denote by $\lambda_i$. Then we need to derive ${\Phi _i}$ in terms of the given system parameters. For completeness, we discuss its proof briefly below. To derive ${\Phi _i}$, we need to derive an expression for the average number of devices of the ${\Phi _i}$ lying in each CE group.
	Firstly, we have the region covered by CE group 0 as 
	\begin{align}
	{\Xi _{{D_0}}} \triangleq \bigcup\limits_{y \in {\Phi _B}} b (y,{D_0}), 
	b(y,{D_0}) \equiv \{ z \in {\mathbb{R}^2}:\left\| {z - y} \right\| < {D_0}\}  
	\end{align}
	Then, the points of $\Phi_{\rm{D}}$ lying in ${\Xi _{{D_0}}}$, form ${\Phi _0}$ :
	\begin{align}
	{\Phi _0} = \{ x \in {\Phi _D}:x \in {\Xi _{{D_1}}}\} 
	\end{align}
	So the average number of points of the ${\Phi _0}$ lying in a given set ${\rm A} \subset {\mathbb{R}^2}$ is
	\begin{align}
	&{\rm E} \Big[ {\sum\limits_{x \in {\Phi _D} \cap {\rm A}} {\prod\limits_{y \in {\Phi _B}} {{\textbf{1}_{b(x,{D_0})}}(y)} } } \Big] \nonumber\\
&	\overset{\text{(a)}}= {{\rm E}_{{\Phi _D}}}\Big[ {\sum\limits_{x \in {\Phi _D} \cap {\rm A}} {{{\rm E}_{{\Phi _B}}}\Big[ {\prod\limits_{y \in {\Phi _B}} {{\textbf{1}_{b(x,{D_0})}}(y)} } \Big]} } \Big] \nonumber\\
	&\overset{\text{(b)}}= {{\rm E}_{{\Phi _D}}}\Big[ {\sum\limits_{x \in {\Phi _D} \cap {\rm A}} {\exp ( - {\lambda _B}\int\limits_{{\mathbb{R}^2}} {(1 - ({\textbf{1}_{b(x,{D_0})}}(y))dy} } } \Big] \nonumber\\
&	\overset{\text{(c)}}= \Big| {\rm A} \Big|{\lambda _D}(1 - \exp ( - {\lambda _B}\pi D_0^2)),  
	\end{align}
	where (a) is due to the independence of point processes $\Phi _B$ and $\Phi _D$, (b) follows from the probability generating functional (PGFL) of a PPP, and (c) follows from the Campbell theorem \cite{chiu2013stochastic}. From the above expression, we can readily infer
	that ${\lambda _0} = {\lambda _D}(1 - \exp ( - {\lambda _B}\pi D_0^2))$. Similarly, we can derive $\lambda_1$ and $\lambda_2$ and prove Lemma 1.

	\section{a proof of lemma 2 } % Using  \eqref{FOUR SINR} and  \eqref{SINR1}, we have
	% {
	% \begin{align}\label{A1}
	% &{\mathbb P}_{0,0}[{\theta _1},{\theta _2}, \cdots ,{\theta _{{k_0}}}\left| {n_0} \right.] \nonumber\\
	% &\overset{\text{(a)}}= {\mathbb{E}}\Big[ {\exp \Big( - \frac{{{\gamma _{th}} }}{\rho}({I_{0}^1} + {\sigma ^2})\Big)\exp \Big( - \frac{{{\gamma _{th}} }}{\rho}({I_{0}^2} + {\sigma ^2})\Big)...\exp \Big( - \frac{{{\gamma _{th}} }}{\rho}({I_{0}^{l_0}} + {\sigma ^2})\Big)\Big| {{n_0}} \big.} \Big]\nonumber \\
	% &= \exp \Big( - \frac{{{l_0}{\gamma _{th}}{\sigma ^2} }}{\rho}\Big){\mathbb{E}}\Big[ {\exp \Big( -\frac{{{\gamma _{th}} }}{\rho}\sum\limits_{\beta  = 1}^{l_0} {{I_{0}^\beta}\Big)\Big| { {n_0}} \big.} } \Big],
	% \end{align}}
	% where (a) follows from the independence of $h_{0}^\beta$ and $I_{0}$ is given in  \eqref{INTRA0}.
	% \begin{align}\label{fourB}
	% &{\mathbb P}_{0,0}[{\theta _1},{\theta _2}, \cdots ,{\theta _{{k_0}}}] 
	% %\nonumber\\&
	% \overset{\text{(a)}}= {\mathbb{E}}\Big[ {\exp \Big( \frac{{{-\gamma _{th}} }}{\rho}({I_{0}^1} + {\sigma ^2})\Big)\exp \Big(  \frac{{{-\gamma _{th}} }}{\rho}({I_{0}^2} + {\sigma ^2})\Big)...\exp \Big(  \frac{{{-\gamma _{th}} }}{\rho}({I_{0}^{l_0}} + {\sigma ^2})\Big) \big.} \Big]\nonumber \\
	% &= \exp \Big( - \frac{{{l_0}{\gamma _{th}}{\sigma ^2} }}{\rho}\Big){\mathbb{E}}\Big[ {\exp \Big( -\frac{{{\gamma _{th}} }}{\rho}\sum\limits_{\beta  = 1}^{l_0} {{I_{0}^\beta}\Big) \big.} } \Big],
	% \end{align}
	The Laplace transform of the aggregate interference  received at the typical BS is derived as 
	{
		\begin{align}\label{B1}
		&{\mathbb E}\Big[\exp \Big( - \frac{{{\gamma _{th}}}}{\rho }\sum\limits_{\beta  = 1}^{{l_0}} {I_{0}^{\beta}\Big) } \Big]\nonumber\\
		&= {\mathbb E}\Big[\exp \Big( - \frac{{{\gamma _{th}}}}{\rho }\sum\limits_{j \in {\mathcal {Z}_0^{}} }^{{}} {P_{0,j} \sum\limits_{\beta  = 1}^{{l_0}} {h_{0,j}^\beta }{(r_{0,j})}^{ - \alpha } } \Big)\Big]\nonumber\\
		& \overset{\text{(a)}}={\mathbb E}\Big[ {{{\prod\limits_{j \in {\mathcal {Z}_0^{}}} {\Big( {\frac{1}{{1 + P_{0,j}{(r_{0,j})}^{ - \alpha }{\gamma_{th}/\rho}}}} \Big)} }^{{l_0}}}} \Big]\nonumber\\
		& \overset{\text{(b)}}= \exp \Big( - 2\pi {\mathcal{A}_0^1\mathcal{R}_0^1\lambda _{0}^a}\int_{{(\frac{P}{\rho})}^{\frac{1}{\alpha}}}^{{\infty}}{\mathbb E}_P {\Big[ {1 - {{\Big( {\frac{1}{{1 + P{x}^{ - \alpha }{\gamma_{th}/\rho}}}} \Big)}^{{l_0}}}} \Big]} xdx\Big)\nonumber\\
		&\overset{\text{(c)}}= \exp \Big( - 2\pi {\mathcal{A}_0^1\mathcal{R}_0^1\lambda _{0}^a}(\frac{{\gamma_{th}}}{\rho})^{\frac{2}{\alpha}}{\mathbb E}[P^{\frac{2}{\alpha}}]\int_{({\gamma_{th}})^{\frac{-1}{\alpha}}}^{{\infty}}{\Big[ {1 - {{\Big( {\frac{1}{{1 + y^{ - \alpha }}}} \Big)}^{{l_0}}}} \Big]} ydy\Big)
		\end{align}}
	where  (a) is obtained by taking the average with respect to $h_{0,j}^\beta$, (b) follows from the probability generation functional (PGFL) of the PPP 
	and (c) follows by changing the variables $y =\displaystyle\frac{x}{(\gamma_{th}P/\rho)^{\frac{1}{\alpha}}}$. 
	{
		The moments of the transmit power is given as \cite{6786498}
		\begin{align}\label{moment}
		{\mathbb E}[{P^\frac{2}{\alpha}}] = \frac{{{\rho ^\frac{2}{\alpha} }\gamma \Big( {2,\pi {\lambda _B}{{( {\frac{P}{\rho }} )}^{\frac{2}{\alpha }}}} \Big)}}{{{{\pi {\lambda _B}}}\Big( {1 - \exp \big( - \pi {\lambda _B}{{( {\frac{P}{\rho }} )}^{\frac{2}{\alpha }}}\big)} \Big)}}, 
		\end{align}
		where $\gamma (a,b) = \int_0^b {{t^{a - 1}}{e^{ - t}}dt}$ is the lower incomplete gamma function. Substituting \eqref{moment} into \eqref{B1}, the final expression in Lemma 2 is derived.
	}

	\section{a proof of lemma 3 } 
	We note that the preamble transmission success probability in  \eqref{SI} depends on the transmission distances.
	According to  \eqref{distance}, we have
	\begin{align}\label{conditionr}
	&p({\gamma _{th}}) = {\mathbb E_{{R}}}\Big[{\mathbb{P}}_{i,0}[ {{\theta _1},{\theta _2},...,{\theta _{{{k_i}}}}| {{r}} } ]\Big]\nonumber\\
	%& = \int\limits_{{D_{i - 1}}}^{{D_i}} {{\rm P}_k^i\left[ {{\theta _1}(k),{\theta _1}(k),...,{\theta _{{m_i}}}(k)\left| {{{k_i}}} \right.} \right]} {f_{{{K_i}}}}({{k_i}})d{{k_i}}\nonumber\\
	&= \int\limits_{{D_{i - 1}}}^{{D_i}} {{\mathbb{P}}_{i,0}[ {{\theta _1},{\theta _2},...,{\theta _{{{k_i}}}}| {{r}} } ]} \frac{{2{r}}}{{D_{_i}^2 - D_{_{i - 1}}^2}}d{r}.  
	\end{align}

	Same as Appendix A, we have 
	\begin{align}\label{ce2distance}
	%&{\mathbb P}_k^i\left[ {{\theta _1}(k),{\theta _1}(k),...,{\theta _{{m_i}}}(k)\left| {k_i},\right.} \right]\nonumber\\
	%& = {\mathbb P}_k^i[SIN{R_1}(k) \ge {\gamma _{th}}, \cdots ,SIN{R_{{l_i}}}(k) \ge {\gamma _{th}}\left| {k_i} \right.]\nonumber\\
	%& = \prod\limits_{\beta  = 1}^{{l_i}} {{\mathbb P}_k^i[h_{ik}^\beta  \ge \frac{{{\gamma _{th}}r_{ik}^\alpha (I_{ik}^\beta  + {\sigma ^2})}}{{P}| {k_i} ]} \nonumber\\
	%& = \prod\limits_{\beta  = 1}^{{l_i}} {\exp ( - {v_i}(I_{ik}^\beta  + {\sigma ^2})\left| {k_i} \right.)} \left| {{v_i}} \right. = \frac{{{\gamma _{th}}r_{ik}^\alpha }}{{P(1 + {\gamma _{th}})}}\nonumber\\
	%& = \exp ( - {l_i}{v_i}{\sigma ^2}){\mathbb E}[ {\exp ( - {v_i}\sum\limits_{\beta  = 1}^{{l_i}} {I_{ik}^\beta \left| r_i \right.} )} ]\left| {{v_i}} \right. = \frac{{{\gamma _{th}}r_{ik}^\alpha }}{{P(1 + {\gamma _{th}})}}
	&{\mathbb P}_{i,0}[ {{\theta _1},{\theta _2},...,{\theta _{{{k_i}}}}| r} ]\nonumber\\
	%& = {\mathbb P}_k^i[SIN{R_1}(k) \ge {\gamma _{th}}, \cdots ,SIN{R_{{l_i}}}(k) \ge {\gamma _{th}}\left| {k_i} \right.]\nonumber\\
	% & = \prod\limits_{\beta  = 1}^{{l_i}} {{\mathbb P}_i^k\Big[h_{i,\beta}^k  \ge \frac{{{\gamma _{th}}({x_{i}^k})^\alpha (I_{i,\beta}^{intra}  + {\sigma ^2})}}{{P}}\Big| x_i \Big]} \nonumber\\
	% & = \prod\limits_{\beta  = 1}^{{l_i}} {\exp \Big( - {v_i}\Big(I_{i,\beta}^{intra}  + {\sigma ^2}\Big)\Big| x_i \Big)} \Big| {{v_i}}  = \frac{{{\gamma _{th}}({x_{i}^k})^\alpha}}{{P}}\nonumber\\
&	= \exp\Big( -  {{\frac{{{l_i\gamma _{th}\sigma ^2}r^\alpha }}{P}}}\Big){\mathbb E}\Big[ {\exp \Big(- {{\frac{{{\gamma _{th}}r^\alpha }}{P}}}\sum\limits_{\beta  = 1}^{{l_i}} {I_{i,0}^{\beta} \Big)\Big| r } } \Big].
	\end{align}
	% where ${{v_i}}  = \displaystyle{{{{\gamma _{th}}({x_{i,k}})^\alpha }}/{{P}}}$ and ${I_{i,k}}$ is the aggregate intra-cell interference in  \eqref{INTRA2}.
	
	The Laplace Transform of the aggregate interference  in CE group $i$  is obtained as
	{
		\begin{align}\label{laplacece2}
		&{\mathbb E}\Big[ {\exp \Big( {-  {{\frac{{{l_i\gamma _{th}\sigma ^2}r^\alpha }}{P}}}}\sum\limits_{\beta  = 1}^{{l_i}} {I_{i,0}^{\beta} \Big)\Big| r} } \Big]\nonumber\\
		% \Big| {{v_i}}  = \frac{{{\gamma _{th}}({x_{i,k}})^\alpha }}{{P}}\nonumber\\
	&	= {\mathbb E}\Big[ {\exp \Big( - {\gamma _{th}r^\alpha}\sum\limits_{j \in  {\mathcal Z}_i} {\sum\limits_{\beta  = 1}^{{l_i}} {h_{i,j}^\beta } {{ {({r_{i,j}})} }^{ - \alpha }}\Big| r } } \Big]\nonumber\\
		&\overset{\text{(a)}}= {\mathbb E}\Big[ {{{\prod\limits_{{j} \in {\Phi _i}} {\Big( {\frac{1}{{1 + \gamma _{th}r^\alpha{(r_{i,j})}^{ - \alpha }}}} \Big)} }^{{l_i}}}} \Big] \overset{\text{(b)}}\nonumber\\
	&	= \exp \Big( - 2\pi {\mathcal{A}_i^1\mathcal{R}_i^1\lambda _{i}^a}\int\limits_{{D_{i}}}^{{\infty}} {\Big( {1 - {{\Big( {\frac{1}{{1 + \gamma _{th}r^\alpha{{y}^{ - \alpha }}}}} \Big)}^{{l_i}}}} \Big)} ydy\Big).
		\end{align}}
	Combing  \eqref{conditionr} -- \eqref{laplacece2}, we proved Lemma 3.

	\bibliographystyle{IEEEtran}

	\bibliography{IEEEabrv,group}
	
\end{document}